\newtheorem{assumption}{Assumption}
\newcommand{\ind}{\stackrel{\mathrm{ind}}{\sim}}
\newcommand{\iid}{\stackrel{\mathrm{iid}}{\sim}}
\DeclarePairedDelimiterX{\infdivx}[2]{(}{)}{%
  #1\;\delimsize\|\;#2%
}
\newcommand{\infdiv}{\infdivx}
\newcommand\smallO{
  \mathchoice
    {{\scriptstyle\mathcal{O}}}
    {{\scriptstyle\mathcal{O}}}
    {{\scriptscriptstyle\mathcal{O}}}
    {\scalebox{.7}{$\scriptscriptstyle\mathcal{O}$}}
  }
\DeclarePairedDelimiter\abs{\lvert}{\rvert}%
\let\oldabs\abs
\def\abs{\@ifstar{\oldabs}{\oldabs*}}
\begin{document}

\title{Bayesian Pseudo Posterior Mechanism under Asymptotic Differential Privacy}

\author{\name Terrance D.\ Savitsky \email Savitsky.Terrance@bls.gov \\
       \addr Office of Survey Methods Research\\
       U.S. Bureau of Labor Statistics\\
       2 Massachusetts Ave NE\\
       Washington, DC 20212, USA
       \AND
       \name Matthew R.\ Williams \email mrwillia@nsf.gov \\
       \addr National Center for Science and Engineering Statistics\\
       National Science Foundation\\
       2415 Eisenhower Ave\\
       Alexandria, VA 22314, USA
       \AND
       \name Jingchen Hu \email jihu@vassar.edu \\
       \addr Vassar College\\
       124 Raymond Ave, Box 27\\
       Poughkeepsie, NY 12604, USA}

\editor{}

\maketitle

\begin{abstract}%
We propose a Bayesian pseudo posterior mechanism to generate record-level synthetic databases equipped with an $(\epsilon,\pi)-$ probabilistic differential privacy (pDP) guarantee, where $\pi$ denotes the probability that any observed database exceeds $\epsilon$.  The pseudo posterior mechanism employs a data record-indexed, risk-based weight vector with weight values $\in [0, 1]$ that surgically downweight the likelihood contributions for high-risk records for model estimation and the generation of record-level synthetic data for public release. The pseudo posterior synthesizer constructs a weight for each data record using the Lipschitz bound for that record under a log-pseudo likelihood utility function that generalizes the exponential mechanism (EM) used to construct a formally private data generating mechanism.  By selecting weights to remove likelihood contributions with non-finite log-likelihood values, we guarantee a finite local privacy guarantee for our pseudo posterior mechanism at every sample size.  Our results may be applied to \emph{any} synthesizing model envisioned by the data disseminator in a computationally tractable way that only involves estimation of a pseudo posterior distribution for parameters, $\theta$, unlike recent approaches that use naturally-bounded utility functions implemented through the EM.  We specify conditions that guarantee the asymptotic contraction of $\pi$ to $0$ over the space of databases, such that the form of the guarantee provided by our method is asymptotic. We illustrate our pseudo posterior mechanism on the sensitive family income variable from the Consumer Expenditure Surveys database published by the U.S. Bureau of Labor Statistics. We show that utility is better preserved in the synthetic data for our pseudo posterior mechanism as compared to the EM, both estimated using the same non-private synthesizer, due to our use of targeted downweighting.  
\end{abstract}

\begin{keywords}
Differential privacy, Pseudo posterior, Pseudo posterior mechanism, Synthetic data
\end{keywords}

\section{Introduction}
\label{sec:intro}

Privacy protection is an important research topic, which attracts attention from government statistical agencies and private companies alike.  A commonly-used data privacy approach generates synthetic data from statistical models estimated on closely-held, private data for proposed release by statistical agencies \citep{Rubin1993synthetic} and \citep{ Little1993synthetic}. This approach replaces the closely-held (by the statistical agency) database with multiple synthetically generated record-level databases. The synthetic databases are released to the public who would use them to conduct any analyses of which they would conceive to be conducted on the real, confidential record-level data.  The synthetic data approach replaces multiple queries performed on a summary statistic with the publication of the synthetic databases encoded with privacy protection, making this approach independent of the specific queries performed by users or putative intruders.

\subsection{Differential Privacy}
Our focus metric for measuring the relative privacy guarantee of our pseudo posterior synthesizing data mechanism introduced in the sequel is differential privacy \citep{Dwork:2006:CNS:2180286.2180305}.  We next provide a  definition for differential privacy \citep{McSherryTalwar2007}.

\begin{definition}[Differential Privacy]\label{def:DP}
Let $\mathbf{x}$ be a database in input space $\mathcal{X}^{n}$, where $\mathcal{X}^{n}$ denotes a space of databases of size (number of observations) $n$. Let $\mathcal{M}$ be a randomized mechanism  such that $\mathcal{M}(): \mathcal{X}^{n} \rightarrow O$. Then $\mathcal{M}$ is $\epsilon$-differentially private if
\[
\frac{Pr[\mathcal{M}(\mathbf{x}) \in O]}{Pr[\mathcal{M}(\mathbf{y}) \in O]} \le \exp(\epsilon),
\]
for all possible outputs $O = Range(\mathcal{M})$ under all possible pairs of datasets $\mathbf{x} \in \mathcal{X}^{n}$ where $\mathbf{y} \in \mathcal{X}^{n-1}$ differs from $\mathbf{x}$ by deleting one record or datum (under a leave-one-out (LOO) distance definition).
\end{definition}

Differential privacy is a property of the mechanism or data generating process and a mechanism that meets the definition above is guaranteed to be $\epsilon-$ differentially private, or $\epsilon-$ DP.  Differential privacy is called a ``formal" privacy guarantee because the $\epsilon-$ level or guarantee is independent of the behavior of a putative intruder seeking to re-identify the data and the guarantee is not lessened by the existence of other data sources that may contain information about the same respondents included in $\mathcal{X}^{n}$.

Differential privacy assigns a disclosure risk for a statistic to be released to the public, $f(\mathbf{x})$ (e.g., total employment for a state-industry) of any $\mathbf{x} \in \mathcal{X}^{n-1}$ based on the global sensitivity, $\Delta = \mathop{\sup}_{\mathbf{x}\in\mathcal{X}^{n},\mathbf{y}\in\mathcal{X}^{n-1}: ~\delta(\mathbf{x},\mathbf{y})=1}\abs{f(\mathbf{x}) - f(\mathbf{y})}$, over the space of databases, $\mathcal{X}$, where $\delta(\mathbf{x},\mathbf{y})$ denotes the number of records omitted from $\mathbf{x}$ in database, $\mathbf{y}$.  The distance metric, $\delta(\mathbf{x},\mathbf{y})$ denotes the LOO distance such that $\mathbf{x}$ differs from $\mathbf{y}$ by a single record, which is equivalent to using a Hamming-1 distance in the case of count based statistics of binary data records. If the value of the statistic, $f$, expresses a high magnitude change after the deletion of a data record in $\mathbf{y}$, then the mechanism will be required to induce a relatively higher level of distortion to $f$.  The more sensitive is a statistic to the change of a record, the higher its disclosure risk.

Our focus in this paper is where the mechanism, $\mathcal{M}$, is a model parameterized by $\theta$ from which replicate data are synthesized under an $\epsilon-$ DP guarantee.  A common approach for generating parameter draws of $\theta$ under the statistical model for synthesizing data is the exponential mechanism (EM) of \citet{McSherryTalwar2007}, which inputs a non-private mechanism for $\theta$ and generates $\theta$ in such a way that induces an $\epsilon-$DP guarantee on the overall mechanism. The EM is conditioned on the availability of a global sensitivity over the space of databases, $\Delta_{u}$ for some utility function, $u(\mathbf{x}, \theta)$, defined on the space of databases and the space of parameters, globally. 

\begin{definition}
(Exponential Mechanism)
The exponential mechanism releases values of $\theta$ from a distribution proportional to,
\begin{equation}
\exp \left(u(\mathbf{x}, \theta) \right),
\end{equation}
where $u(\mathbf{x}, \theta)$ is a utility function. Let \newline 
$\Delta_{u} = \mathop{\sup}_{\mathbf{x}\in \mathcal{X}^{n}} \,\, \mathop{\sup}_{\mathbf{x}, \mathbf{y}: \delta(\mathbf{x}, \mathbf{y}) = 1} \, \, \mathop{\sup}_{\theta \in \Theta} \,\, \abs{u(\mathbf{x}, \theta) - u(\mathbf{y}, \theta)}$ be the sensitivity, defined globally over $\mathbf{x} = (x_{1},\ldots,x_{n}) \in \mathcal{X}^n$, the $\sigma-$algebra of datasets, $\mathbf{x}$, governed by product measure, $P_{\theta_{0}}$ and the LOO distance metric, $\delta(\mathbf{x},\mathbf{y}) = 1$.  Then each draw of $\theta$ from the exponential mechanism is guaranteed to be $ \epsilon = 2\Delta_{u}-$DP.
\end{definition}
\noindent This result is based on the following definition of differential privacy under utility function, $u(\mathbf{x},\theta)$.
\begin{definition}
(Differential Privacy under the Exponential Mechanism)
A utility function, $u$, indexed by random parameters, $\theta$, gives $\epsilon-$differential privacy if for all databases, $\mathbf{x} \in \mathcal{X}^{n}$ and associated databases, $\mathbf{y}:\delta(\mathbf{x}, \mathbf{y}) = 1$,
and all parameter values, $\theta \in\Theta$,
\begin{equation}
\mbox{Pr}\left(u(\mathbf{x},\theta) \in O\right)\leq \exp(\epsilon) \times \mbox{Pr}\left(u(\mathbf{y},\theta) \in O\right),
\end{equation}
where $O = \mbox{range}(u)$.
\end{definition}

In order to set an arbitrary $\epsilon \ne 2\Delta_{u}$, we must modify the utility function $u(\mathbf{x}, \theta)$. The statistical agency owning the closely-held data will typically desire to determine $\epsilon$ as a matter of policy and not leave it to be $ \epsilon = 2\Delta_{u}$. The simplest and most common approach is to rescale it: $u^{*}(\mathbf{x}, \theta)=\frac{\epsilon}{2 \Delta_{u}}u(\mathbf{x}, \theta)$ \citep[See][among many others]{McSherryTalwar2007, Dwork:2006:CNS:2180286.2180305}. 

The EM inputs a utility function and its sensitivity constructed as the supremum of the utility over the space of databases, $\mathcal{X}^{n}$, and simultaneously, the parameter space, $\Theta$. \citet{WassermanZhou2010} and \citet{SnokeSlavkovic2018PSD} construct utility functions (e.g., the Kolmogorov-Smirnov distance between the empirical distributions of the real and synthetic datasets) that are \emph{naturally} bounded over all $\mathbf{x} \in \mathcal{X}^{n}$, resolving the challenge of using the potentially unbounded log-likelihood as the utility function.  Although the use of a naturally bounded utility resolves the issue of truncating the data and parameter spaces, there is a \emph{large}, and perhaps \emph{intractable}, computational cost to the use of these naturally bounded utilities to draw samples of $\theta$ from the distribution constructed from the EM; for example, \citet{SnokeSlavkovic2018PSD} must compute their $pMSE$ utility statistic multiple times for each proposed value, $\hat{\theta}_l$ ($l = 1, \cdots, L$), under a Metropolis-Hastings algorithm used to draw samples under the EM.  Furthermore, they assume the existence of some synthesizing distribution, $g(\hat{\theta})$, from which to draw synthetic data, is needed to compute their $pMSE$. In practice, $g$ will be defined as the posterior predictive distribution, $g(\mathbf{X} \mid \mathbf{x}, \hat{\theta}_l)$, which means the posterior distribution must be repeatedly estimated for \emph{each} draw from of $\theta$ from the EM.

\citet{Dimitrakakis:2017:DPB:3122009.3122020} utilize the log-likelihood as the utility under the EM such that the EM reduces to the the model posterior distribution, $\xi(\theta \mid \mathbf{x})$, as the mechanism, $\mathcal{M}$.  They specify conditions under which the Bayesian posterior distribution is a formally private mechanism for generating  synthetic data. They construct their posterior distribution from the private data, $\mathbf{x}$, and parameters, $\theta$, used to generate the synthetic data. They show that if the log-likelihood is Lipschitz continuous with bound $\Delta$ over the space of databases, $\mathbf{x}\in\mathcal{X}^{n}$ (the space of databases of size, $n$) and the space of parameters, $\theta\in\Theta$, then the posterior mechanism achieves an $\epsilon = 2 \Delta -$DP guarantee for each posterior draw of $\theta$, the model parameter(s). However, \citet{Dimitrakakis:2017:DPB:3122009.3122020} acknowledge that computing a finite $\Delta$, in practice, under the use of the log-likelihood is particularly difficult for an unbounded parameter space. They specify relatively simple Bayesian probability models where the Lipschitz bound is analytically available.  Even in such simple model setting, \citet{Dimitrakakis:2017:DPB:3122009.3122020} require truncation of the support of the prior distribution to achieve a finite $\Delta$. Relatively simply-constructed differentially private Bayesian synthesizers are similarly proposed by \citet{OnTheMap2008, AbowdVilhuber2008PSD, McClureReiter2012TDP} and \citet{BowenLiu2020}. The utility performance to preserve the real data distribution in the simulated synthetic data of these simple posterior mechanisms under a truncated prior support may be severely compromised by truncation and over-smoothing (induced by simple, parametric prior distributions). 

\citet{HuSavitskyWilliams2021rebds} design a record-indexed weight $\alpha_i \in [0,1]$, which is set to be inversely proportional to their construction for the identification risk probability of record, $i$; a data record that expresses a relatively high probability of identification disclosure will receive a likelihood weight, $\alpha_{i}$, that is closer to $0$, while a data record with a low disclosure probability will receive a likelihood weight, $\alpha_{i}$, that is closer to $1$.  The vector weights $\bm{\alpha} = (\alpha_1, \cdots, \alpha_n)$ are subsequently applied to the likelihood function of all $n$ records to form the pseudo posterior,
\begin{equation}
\label{eq:pp}
\xi^{\bm{\alpha}}\left(\theta \mid \mathbf{x},\gamma\right) \propto \left[\mathop{\prod}_{i=1}^{n}p\left(x_{i} \mid \theta\right)^{\alpha_{i}}\right]\xi\left(\theta\mid \gamma\right),
\end{equation}
where $\theta$ denotes the model parameters, $\gamma$ denotes the model hyperparameters and $\xi(\cdot)$ denotes the prior distribution. This construction employs a data record-indexed, risk-based weight vector with weights $\in [0, 1]$ to surgically downweight high-risk records in estimation of a pseudo posterior distribution for $\theta$, subsequently used to generate and release a synthetic record-level database. \citet{HuSavitskyWilliams2021rebds} show that this selective downweighting of records reduces the average of by-record risks as compared to an unweighted synthesis, while inducing only a minor reduction in utility. Their risk measure is based on a calculated probability of identification for a record. They cast a radius around the true data value for each record and count the number of record values that lie outside of the radius, which directly measures the extent that the target record is isolated and, therefore, easier for an intruder to discover by random guessing.  While this risk measure appeals to intuition, it is based on an assumption about the behavior of a putative intruder.  By contrast, the DP framework makes no explicit assumptions about the behavior or knowledge of an intruder.

This paper extends \citet{HuSavitskyWilliams2021rebds} and \citet{Dimitrakakis:2017:DPB:3122009.3122020} by constructing weights $\bm{\alpha}(\mathbf{x}) = \left(\alpha_{1}(\mathbf{x}),\ldots,\alpha_{n}(\mathbf{x})\right)$ where $\alpha_{i}(\mathbf{x}) \propto 1/\mathop{\sup}_{\theta\in\Theta}\vert f(x_{i}\mid\theta)\vert$ that we show achieves a formal privacy guarantee that they were not able to achieve.

The remainder of the paper is organized as follows: Section \ref{sec:theory} introduces a pseudo posterior mechanism using weights $\bm{\alpha}$ plugged into Equation~\ref{eq:pp}. We  generalize \citet{Dimitrakakis:2017:DPB:3122009.3122020} by establishing a direct functional association between a Lipschitz bound, $\Delta_{\bm{\alpha}}$, for the pseudo posterior mechanism and a $(\epsilon=2\Delta_{\bm{\alpha}})-$ DP guarantee. In Section \ref{sec:lipschitz}, we describe the computation details to produce a matrix of (absolute values for) log-likelihoods estimated for the $n$ records and $S$ parameter draws taken from the unweighted posterior distribution and their subsequent use to formulate a vector of record-indexed weights, $\bm{\alpha}$, for a single observed database, $\mathbf{x}$.  We then discuss the procedure to use the $\bm{\alpha}$ to estimate the pseudo posterior distribution and the computation of the Lipschitz bound for the pseudo posterior mechanism based on the observed or local database.  We call a Lipschitz bound constructed from a single (observed) database as a local Lipschitz. By contrast, we label a Lipschitz guarantee that represents a uniform bound over the space of databases as a global Lipschitz. Section \ref{sec:lockin} specifies formal conditions that guarantee the asymptotic contraction of a local Lipschitz bound to the global Lipschitz bound over the space of databases.  We include a Monte Carlo simulation study that generates a collection of local databases and shows that the infimum and supremum of the local Lipschitz bounds collapse together to a global value as $n$ approaches $1000$.  Section \ref{sec:app} focuses on our application to synthesizing the family income variable of a sample from the Consumer Expenditure surveys administered by the U.S. Bureau of Labor Statistics (BLS).  This section presents the risk and utility curves of locally differentially private synthetic data generated under the proposed pseudo posterior mechanism, compared to the EM. We conclude with a discussion in Section \ref{sec:conclusion}.

\section{Differential Privacy for the Pseudo Posterior}
\label{sec:theory}

In this section, we specify the connection between achieving a global Lipschitz bound, $\Delta_{\bm{\alpha}}$, under our pseudo posterior mechanism of Equation~(\ref{eq:pp})  with weights $\bm{\alpha}$ and an $\epsilon-$ DP (or global DP) guarantee (over the space of databases). We further re-purpose a result from \citet{WassermanZhou2010} to extend a global DP guarantee for the mechanism generating parameters, $\theta$, to the pseudo posterior predictive mechanism for generating synthetic data that is based on integrating with respect to the globally DP privacy guaranteed pseudo posterior distribution mechanism (used to generate the model parameters).  After having shown that achievement of a global Lipschitz under our pseudo posterior mechanism produces a global DP privacy guarantee, we discuss constructing by-record weights used in our pseudo posterior mechanism that are designed to be inversely proportional to the (absolute value of) log-likelihood utilities computed over the parameter space.  The log-likelihood for each record represents its relative risk of identification disclosure for the record since it governs the Lipschitz bound that defines the sensitivity.  This construction of weights allows us to achieve a global Lipschitz (linked to a global DP guarantee) without data or parameter truncation.

\subsection{Preliminaries}
We begin by constructing the probability space, $(\Theta, \beta_{\Theta})$, equipped with prior distribution, $\xi(\theta)$.  Observe a database sequence, $\mathbf{x} = \left(x_{1},\ldots,x_{n}\right) \in \mathcal{X}^{n}$  under $x_{1},\ldots,x_{n} \ind P_{\theta_{0}}$, for some $\theta_{0}\in\Theta$, we formulate the pseudo likelihood,
\begin{equation}
\label{pseudolike}
p_{\theta}^{\bm{\alpha}}(\mathbf{x}) = \mathop{\prod}_{i=1}^{n} p_{\theta_i}(x_i)^{\alpha_i(\mathbf{x})},
\end{equation}
for each $\theta \in \Theta$ and $\mathbf{x}\in\mathcal{X}^{n}$.  The pseudo likelihood exponentiates likelihood contributions by $\bm{\alpha}(\mathbf{x}) = (\alpha_{1}(\mathbf{x}),\ldots,\alpha_{n}(\mathbf{x}))$, where $\alpha_{i}(\mathbf{x}) \in [0,1]$ denote weights that are constructed to be inversely proportional to the local identification disclosure risk for each observed dataset record. These weights are subsequently used to selectively downweight the likelihood contributions for records in proportion to the level identification disclosure risks that they express.  

Given the prior and pseudo likelihood, we construct the pseudo posterior distribution,
\begin{equation}
\label{pseudopost}
\xi^{\bm{\alpha}}(B \mid \mathbf{x}) = \frac{\int_{\theta \in B} p_{\theta}^{\bm{\alpha}}(\mathbf{x}) d \xi(\theta)}{\phi^{\bm{\alpha}}(\mathbf{x})} = \frac{\mathop{\int}_{\theta\in B}e^{-r_{n,\bm{\alpha}\left(\theta,\theta^{\ast}\right)}}d \xi(\theta)}{\int_{\theta\in \Theta}e^{-r_{n,\bm{\alpha}\left(\theta,\theta^{\ast}\right)}}d \xi(\theta)},
\end{equation}
where $\phi^{\bm{\alpha}}(\mathbf{x}) \overset{\Delta}{=} \int_{\theta \in \Theta} p_{\theta}^{\bm{\alpha}}(\mathbf{x}) d \xi(\theta)$ normalizes the pseudo posterior distribution and\\ $r_{n,\bm{\alpha}}\left(\theta,\theta^{\ast}\right) = \mathop{\sum}_{i=1}^{n}\alpha_{i}\log\left\{p_{\theta_{i}^{\ast}}(x_{i}) / p_{\theta_{i}}(x_{i})\right\}$, which is a generalization of the definition from \citet{BFP2019AS} that uses a fixed, scalar weight to now incorporate risk-adjusted, record-indexed weights, $(\alpha_{i})_{i=1, \cdots, n}$ where each $\alpha_{i}(\mathbf{x})$ depends on the closely-held data.

We formulate the $\bm{\alpha}-$weighted log-pseudo likelihood,
\begin{equation}
f^{\bm{\alpha}}_{\theta}(\mathbf{x}) = \mathop{\sum}_{i=1}^{n} \alpha_i(\mathbf{x}) \log p_{\theta}(x_i),
\end{equation}
that we use to construct a pseudo posterior mechanism. 

\subsection{Main Results}
Our task is to specify assumptions that guarantee our pseudo posterior mechanism achieves an $\epsilon-$DP guarantee. In particular, we extend \citet{Dimitrakakis:2017:DPB:3122009.3122020} to show a direct relationship between the Lipschitz bound for the pseudo likelihood and the resulting $\epsilon-$ DP guarantee where both are a function of the record-indexed vector of weights, $\bm{\alpha}$, specified by the data provider. We present a collection of related results in this section with all of the associated proofs in Appendix \ref{app:theory}.

\subsubsection{Link the Global Lipschitz Bound to the Global DP Guarantee}
In this section and corresponding sections in Appendix \ref{app:theory}, we use the explicit notation $\bm{\alpha}(\mathbf{x})$ to emphasize the dependence of the $\alpha_{i} \leq 1$ on the closely-held data, $\mathbf{x}$.
We begin by extending the definition of DP from \citet{Dimitrakakis:2017:DPB:3122009.3122020} to our $\bm{\alpha}-$weighted pseudo posterior mechanism.
\begin{definition}
(Differential Privacy under the Pseudo Posterior Mechanism)
\begin{equation}
\mathop{\sup}_{\mathbf{x}\in \mathcal{X}^{n},\mathbf{y}\in \mathcal{X}^{n-1}:\delta(\mathbf{x}, \mathbf{y}) = 1} \mathop{\sup}_{B \in \beta_{\Theta}} \frac{\xi^{\bm{\alpha}(\mathbf{x})}(B \mid \mathbf{x})}{\xi^{\bm{\alpha}(\mathbf{y})}(B \mid \mathbf{y})} \leq e^{\epsilon}, \nonumber
\end{equation}
\end{definition}
\noindent which limits the change in the pseudo posterior distribution over all sets, $B \in \beta_{\Theta}$ (i.e. $\beta_{\Theta}$ is the $\sigma-$algebra of measurable sets on $\Theta$), from the inclusion of a single record (under the leave-one-out (LOO) distance, $\delta(\mathbf{x}, \mathbf{y}) = 1$, such that $\mathbf{y}$ differs from $\mathbf{x}$ by the omission of a \emph{single} data record).  Although the pseudo posterior distribution mass assigned to $B$ depends on $\mathbf{x}$, the $\epsilon$ guarantee is defined as the supremum over all $\mathbf{x} \in \mathcal{X}^{n}$.

Our main assumption extends \citet{Dimitrakakis:2017:DPB:3122009.3122020} to bound the \emph{log-pseudo likelihood ratio}, uniformly, for all databases, $\mathbf{y}\in\mathcal{X}^{n-1}$ that are at a LOO distance (i.e. $\delta(\mathbf{x},\mathbf{y}) = 1)$, over all $\mathbf{x}\in\mathcal{X}^n$ and over all $\theta \in \Theta$.   The uniform bound defines a maximum sensitivity in the log-pseudo likelihood from the inclusion of a record.  Our intuition that the magnitude of this sensitivity for the log-pseudo likelihood ratio is directly tied to the resulting $\epsilon-$ DP guarantee of the pseudo posterior is confirmed in is confirmed in Theorem~\ref{th:dpresult} for pseudo posterior draws of $\theta$ and in Lemma~\ref{lm:postpred} for the subsequent generation of a synthetic
database from a draw of $\theta$. 

\begin{assumption}
\label{ass:lipschitz}
(Lipschitz continuity)

\noindent Fix some $\theta \in \Theta$ and define a collection of record indexed mappings $\bm{\alpha}(\cdot)$: $\left\{ \mathcal{X}_i \rightarrow [0,1]\right\}_{n}$ for records $i = 1, \ldots n$
and construct the Lipschitz function of $\theta$ over the space of databases,
\begin{eqnarray}\nonumber
\ell^{\bm{\alpha}}(\theta) &\overset{\Delta}{=}& \inf \left\{w: \abs{f^{\bm{\alpha}(\mathbf{x})}_{\theta}(\mathbf{x}) - f^{\bm{\alpha}(\mathbf{y})}_{\theta}(\mathbf{y})} \leq w, \forall \mathbf{x} \in \mathcal{X}^{n}, \mathbf{y}  \in \mathcal{X}^{n-1}: \delta(\mathbf{x}, \mathbf{y}) = 1\right\}.
\end{eqnarray}

\noindent Assumption~\ref{ass:lipschitz} restricts $\Theta$ such that the Lipschitz function of $\theta$ is uniformly bounded from above,
\begin{equation}\nonumber
\ell^{\bm{\alpha}}(\theta) \leq \Delta_{\bm{\alpha}} = \mathop{\sup}_{\theta\in\Theta}\left\{\ell^{\bm{\alpha}}(\theta)\right\}.
\end{equation}
\end{assumption}
\noindent Since $\bm{\alpha}(\cdot)$ is a vector of record-indexed functions, $\bm{\alpha}(\mathbf{x})$ and $\bm{\alpha}(\mathbf{y})$ only differ for a single record $j$ when $\mathbf{x}$ and $\mathbf{y}$ only differ in one record. Then $\bm{\alpha}(\mathbf{y}) =  \bm{\alpha}(\mathbf{x}_{-j})$. We note that the subscripting of $\Delta$ with $\bm{\alpha}$ is a notational device that denotes a Lipschitz bound computed using the log-pseudo likelihood, $f^{\bm{\alpha}(\mathbf{x})}_{\theta}(\mathbf{x})$ as contrasted with $\Delta$ computed using the unweighted posterior mechanism.  We further note that our general result simplifies to that of \citet{Dimitrakakis:2017:DPB:3122009.3122020} by specifying $\bm{\alpha} = \mathbf{1} $ and therefore $\ell^{\bm{\alpha}}(\theta)  \leq \Delta_{\bm{\alpha}} \leq\Delta$. 

We refer to $\Delta_{\bm{\alpha}}$ as ``global" over the space of databases, $\mathbf{x}\in\mathcal{X}^{n}$ and it represents the sensitivity of the $\bm{\alpha}-$weighted pseudo likelihood of Equation~(\ref{pseudolike}) that we use as our utility function. The Lipschitz function of $\theta$ and $\bm{\alpha}$, $\ell^{\bm{\alpha}}(\theta)$, is constructed using the pseudo log-likelihood, $f^{\bm{\alpha}(\mathbf{x})}_{\theta}(\mathbf{x})$ that incorporates record-indexed weights, $\bm{\alpha}(\mathbf{x})$, each of which is $\leq 1$.  Selecting an $\alpha_{i}$ close to zero indicates strong downweighting of a highly sensitive record for an unweighted posterior mechanism (with a high magnitude log-likelihood ratio for some $\theta \in \Theta$), which will reduce the sensitivity of that record under our pseudo posterior mechanism.  We see in our first two results that reducing the sensitivity of the log-likelihood ratio directly improves (i.e. reduces the value of $\epsilon$) the $\epsilon-$DP guarantee. 

Our next result directly connects the (global) Lipschitz bound, $\Delta_{\bm{\alpha}}$, for the log-pseudo likelihood of Assumption~\ref{ass:lipschitz} to resulting DP guarantee, $\epsilon = 2\Delta_{\bm{\alpha}}$, for each draw of $\theta$ from the pseudo posterior distribution. 

\begin{theorem}
\label{th:dpresult}
$\forall \mathbf{x} \in \mathcal{X}^{n}, \mathbf{y} \in \mathcal{X}^{n-1}:\delta(\mathbf{x}, \mathbf{y}) = 1, B \in \beta_{\Theta}$ (where $\beta_{\Theta}$ is the $\sigma-$algebra of measurable sets on $\Theta$) under $\bm{\alpha}(\cdot)$ with $\Delta_{\bm{\alpha}} > 0$ satisfying Assumption \ref{ass:lipschitz}:
\begin{equation}
\mathop{\sup}_{B \in \beta_{\Theta}} \,\,\, \mathop{\sup}_{\mathbf{x} \in \mathcal{X}^{n}, \mathbf{y}\in \mathcal{X}^{n-1}: \delta(\mathbf{x}, \mathbf{y}) = 1} \frac{\xi^{\bm{\alpha}(\mathbf{x})}(B \mid \mathbf{x})}{\xi^{\bm{\alpha}(\mathbf{y})}(B \mid \mathbf{y})} \leq \exp(2\Delta_{\bm{\alpha}}),
\end{equation}
i.e. the pseudo posterior $\xi^{\bm{\alpha}(\mathbf{x})}(\cdot \mid \mathbf{x})$ is $2\Delta_{\bm{\alpha}}-$DP.
\end{theorem}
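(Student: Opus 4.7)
The plan is to mimic the standard posterior-mechanism argument of \citet{Dimitrakakis:2017:DPB:3122009.3122020}, adapted to accommodate the fact that our record-indexed weight vector $\bm{\alpha}(\cdot)$ itself depends on the database. The crucial enabling observation is that Assumption~\ref{ass:lipschitz} has already absorbed this data-dependence into the definition of the Lipschitz bound: for every $\theta\in\Theta$ and every LOO pair $(\mathbf{x},\mathbf{y})$,
\begin{equation}
\bigl|f^{\bm{\alpha}(\mathbf{x})}_{\theta}(\mathbf{x}) - f^{\bm{\alpha}(\mathbf{y})}_{\theta}(\mathbf{y})\bigr| \leq \Delta_{\bm{\alpha}},
\end{equation}
which exponentiates to the pointwise pseudo-likelihood sandwich
\begin{equation}
e^{-\Delta_{\bm{\alpha}}}\, p_{\theta}^{\bm{\alpha}(\mathbf{y})}(\mathbf{y}) \;\leq\; p_{\theta}^{\bm{\alpha}(\mathbf{x})}(\mathbf{x}) \;\leq\; e^{\Delta_{\bm{\alpha}}}\, p_{\theta}^{\bm{\alpha}(\mathbf{y})}(\mathbf{y}).
\end{equation}

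The next step is to factor the ratio of pseudo posteriors into a product of two integral ratios,
\begin{equation}
\frac{\xi^{\bm{\alpha}(\mathbf{x})}(B\mid\mathbf{x})}{\xi^{\bm{\alpha}(\mathbf{y})}(B\mid\mathbf{y})}
= \underbrace{\frac{\int_{B} p_{\theta}^{\bm{\alpha}(\mathbf{x})}(\mathbf{x})\,d\xi(\theta)}{\int_{B} p_{\theta}^{\bm{\alpha}(\mathbf{y})}(\mathbf{y})\,d\xi(\theta)}}_{\text{numerator ratio}} \;\cdot\; \underbrace{\frac{\phi^{\bm{\alpha}(\mathbf{y})}(\mathbf{y})}{\phi^{\bm{\alpha}(\mathbf{x})}(\mathbf{x})}}_{\text{reciprocal normalizer ratio}},
\end{equation}
and then attack each factor separately with the pointwise sandwich. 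For the numerator ratio, the upper bound $p_{\theta}^{\bm{\alpha}(\mathbf{x})}(\mathbf{x}) \le e^{\Delta_{\bm{\alpha}}} p_{\theta}^{\bm{\alpha}(\mathbf{y})}(\mathbf{y})$ pulls a factor $e^{\Delta_{\bm{\alpha}}}$ out of the integral (because it holds uniformly in $\theta$), leaving a ratio bounded by $e^{\Delta_{\bm{\alpha}}}$. For the normalizer factor, the lower bound $p_{\theta}^{\bm{\alpha}(\mathbf{x})}(\mathbf{x}) \ge e^{-\Delta_{\bm{\alpha}}} p_{\theta}^{\bm{\alpha}(\mathbf{y})}(\mathbf{y})$ integrated over all of $\Theta$ yields $\phi^{\bm{\alpha}(\mathbf{x})}(\mathbf{x}) \ge e^{-\Delta_{\bm{\alpha}}} \phi^{\bm{\alpha}(\mathbf{y})}(\mathbf{y})$, i.e.\ the reciprocal normalizer ratio is at most $e^{\Delta_{\bm{\alpha}}}$. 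Multiplying the two factors produces the claimed $e^{2\Delta_{\bm{\alpha}}}$ bound.

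Finally, because the bound was derived pointwise in $\theta$ and the resulting constant does not depend on $B$, $\mathbf{x}$, or $\mathbf{y}$, taking the supremum over measurable $B\in\beta_{\Theta}$ and over LOO-adjacent database pairs preserves the inequality, which is exactly the claim. The main conceptual obstacle is being careful that the data-dependence of $\bm{\alpha}$ does not introduce any new coupling between numerator and denominator that would spoil the clean sandwich; this is precisely why Assumption~\ref{ass:lipschitz} takes the supremum of $|f^{\bm{\alpha}(\mathbf{x})}_{\theta}(\mathbf{x}) - f^{\bm{\alpha}(\mathbf{y})}_{\theta}(\mathbf{y})|$ over LOO pairs with the $\bm{\alpha}$'s evaluated at the matching databases, so the Lipschitz constant $\Delta_{\bm{\alpha}}$ already accounts for simultaneous changes in the weight vector and the database. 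A minor technical point worth mentioning is that we implicitly need $\phi^{\bm{\alpha}(\mathbf{y})}(\mathbf{y}) < \infty$ and $\phi^{\bm{\alpha}(\mathbf{x})}(\mathbf{x}) > 0$ so that the ratios are well-defined; both follow from $\Delta_{\bm{\alpha}}<\infty$ together with properness of the prior $\xi(\theta)$, so no separate argument is needed.
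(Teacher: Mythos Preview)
Your proposal is correct and follows essentially the same argument as the paper: use Assumption~\ref{ass:lipschitz} to sandwich the pseudo-likelihood pointwise in $\theta$, factor the pseudo-posterior ratio into a numerator-integral ratio and a reciprocal-normalizer ratio, and bound each factor by $e^{\Delta_{\bm{\alpha}}}$. The only cosmetic difference is that the paper first states an auxiliary KL-divergence result (Theorem~\ref{thm:KL}) and derives the normalizer bound $\phi^{\bm{\alpha}(\mathbf{y})}(\mathbf{y}) \le e^{\Delta_{\bm{\alpha}}}\phi^{\bm{\alpha}(\mathbf{x})}(\mathbf{x})$ inside that proof before citing it, whereas you obtain the same normalizer bound directly; your route is slightly more self-contained but substantively identical.
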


This result directly connects the global Lipschitz bound to the global DP guarantee and will allow us to control the DP guarantee, indirectly, by setting the record-indexed weights, $\bm{\alpha}(\mathbf{x}) = \left(\alpha_{1}(\mathbf{x}),\ldots,\alpha_{n}(\mathbf{x})\right)$, that determines the Lipschitz bound.

Our next result extends our DP guarantee from pseudo posterior draws of $\theta$ for models that satisfy Assumption~\ref{ass:lipschitz} to draws of synthetic data, $\bm{\zeta} = (\zeta_{1},\ldots,\zeta_{m})$, constructed from the model pseudo posterior predictive distribution.  The generation of synthetic data is the purpose for the pseudo posterior mechanism.
\begin{lemma}
\label{lm:postpred}
Define $P^{\bm{\alpha}(\mathbf{x})}(\bm{\zeta} \in C \mid \mathbf{x}) = \int P(\bm{\zeta} \in C \mid \theta, \mathbf{x}) d \xi^{\bm{\alpha}(\mathbf{x})}(\theta \mid \mathbf{x})$ as the pseudo posterior predictive probability mass for $\bm{\zeta}$ in set $C \in \mathcal{A}^{n}$ (the $\sigma-$algebra of sets for $\mathcal{X}^{n}$), constructed from our pseudo posterior model for $\theta$ that satisfies DP with expenditure, $\epsilon$. Let $\bm{\zeta} = (\zeta_1, \ldots, \zeta_m)$ be $m$ independent draws from $P^{\bm{\alpha}(\mathbf{x})}(\bm{\zeta} \in C \mid \mathbf{x})$. This defines a mechanism for $\bm{\zeta}$ that satisfies DP with expenditure $\epsilon$ for any $m \leq n$.
\end{lemma}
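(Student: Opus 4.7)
The plan is to prove Lemma~\ref{lm:postpred} as a mixture / post-processing argument that transfers the $\epsilon = 2\Delta_{\bm{\alpha}}$ DP guarantee of Theorem~\ref{th:dpresult} from $\theta$ to the synthetic draws $\bm{\zeta}$. The essential structural observation is that, once $\theta$ has been sampled from the pseudo posterior, the synthetic data are generated from the model $p_{\theta}$ alone and do not re-access the closely-held data; hence $P(\bm{\zeta} \in C \mid \theta, \mathbf{x}) = P(\bm{\zeta} \in C \mid \theta)$, so the whole mechanism $\mathbf{x} \mapsto \bm{\zeta}$ is a mixture of $\mathbf{x}$-independent kernels weighted by a DP distribution on $\theta$.

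First I would fix $\mathbf{x} \in \mathcal{X}^{n}$ and $\mathbf{y} \in \mathcal{X}^{n-1}$ with $\delta(\mathbf{x},\mathbf{y})=1$ and an arbitrary $C \in \mathcal{A}^{m}$, and write
\begin{equation*}
P^{\bm{\alpha}(\mathbf{x})}(\bm{\zeta}\in C \mid \mathbf{x}) = \int_{\Theta} P(\bm{\zeta} \in C \mid \theta)\, d\xi^{\bm{\alpha}(\mathbf{x})}(\theta \mid \mathbf{x}).
\end{equation*}
Next I would upgrade the set-wise DP bound of Theorem~\ref{th:dpresult} to a pointwise Radon--Nikodym bound. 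By direct calculation from the pseudo posterior of Equation~(\ref{pseudopost}),
\begin{equation*}
\frac{d\xi^{\bm{\alpha}(\mathbf{x})}(\theta \mid \mathbf{x})}{d\xi^{\bm{\alpha}(\mathbf{y})}(\theta \mid \mathbf{y})} = \frac{p_{\theta}^{\bm{\alpha}(\mathbf{x})}(\mathbf{x})}{p_{\theta}^{\bm{\alpha}(\mathbf{y})}(\mathbf{y})} \cdot \frac{\phi^{\bm{\alpha}(\mathbf{y})}(\mathbf{y})}{\phi^{\bm{\alpha}(\mathbf{x})}(\mathbf{x})},
\end{equation*}
and under Assumption~\ref{ass:lipschitz}, since $\delta(\mathbf{x},\mathbf{y})=1$ forces $\bm{\alpha}(\mathbf{y}) = \bm{\alpha}(\mathbf{x}_{-j})$ for a single $j$, the first factor is bounded by $\exp(\Delta_{\bm{\alpha}})$ pointwise in $\theta$, and the reciprocal ratio of normalizers is bounded by the same quantity after integrating that pointwise bound against the prior --- the identical computation already carried out in the proof of Theorem~\ref{th:dpresult}.

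Third, I would pass the pointwise bound through the integral by a change of reference measure:
\begin{equation*}
P^{\bm{\alpha}(\mathbf{x})}(\bm{\zeta} \in C \mid \mathbf{x}) = \int P(\bm{\zeta}\in C \mid \theta)\, \frac{d\xi^{\bm{\alpha}(\mathbf{x})}(\theta \mid \mathbf{x})}{d\xi^{\bm{\alpha}(\mathbf{y})}(\theta \mid \mathbf{y})}\, d\xi^{\bm{\alpha}(\mathbf{y})}(\theta\mid\mathbf{y}) \leq e^{\epsilon}\, P^{\bm{\alpha}(\mathbf{y})}(\bm{\zeta} \in C \mid \mathbf{y}),
\end{equation*}
with $\epsilon = 2\Delta_{\bm{\alpha}}$. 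Taking suprema over $C$ and over LOO pairs $(\mathbf{x},\mathbf{y})$ then yields the advertised DP guarantee for the predictive mechanism. The constraint $m \leq n$ does not materially enter the argument: conditionally independent draws $\zeta_{1},\ldots,\zeta_{m}$ from $p_{\theta}$ given a single DP draw of $\theta$ inherit DP by the same mixture argument applied to the product kernel $p_{\theta}^{\otimes m}$, for any $m$.

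I expect the main obstacle to be the second step --- rigorously establishing the pointwise density ratio bound --- because the set-wise bound from Theorem~\ref{th:dpresult} does not immediately deliver a pointwise Radon--Nikodym bound without Lebesgue differentiation on $\Theta$. The cleaner route, which I would follow, is to work directly with the pseudo posterior densities via Assumption~\ref{ass:lipschitz}. The subtlety there is keeping track of the fact that $\bm{\alpha}(\cdot)$ is itself data-dependent, so the numerator and denominator pseudo likelihoods use different exponent vectors; this is controlled by the observation that $\bm{\alpha}(\mathbf{x})$ and $\bm{\alpha}(\mathbf{y})$ differ only in the single coordinate of the omitted record $j$, which is precisely the setting in which Assumption~\ref{ass:lipschitz} furnishes the uniform bound.
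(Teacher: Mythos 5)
Your proposal is correct and follows essentially the same route as the paper: decompose the predictive probability as a mixture of $\mathbf{x}$-free kernels $P(\bm{\zeta}\in C\mid\theta)$, change the reference measure to $\xi^{\bm{\alpha}(\mathbf{y})}(\cdot\mid\mathbf{y})$, and bound the Radon--Nikodym derivative pointwise by $e^{\epsilon}$ using the density-ratio and normalizer bounds already established in the proof of Theorem~\ref{th:dpresult}. The paper's proof is simply a terser version of this same chain of equalities and the single inequality, and your added care about deriving the pointwise bound directly from the pseudo posterior densities (rather than from the set-wise DP statement) is exactly how the paper's argument is meant to be read.
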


We next formalize the method to construct our weighting scheme that characterizes our pseudo posterior mechanism.

\begin{assumption}
\label{ass:riskweight}
(Risk-based Weighting for Pseudo Posterior Mechanism)

\noindent Fix a value for $n$, the number of data records.  Let $m(\cdot)$ be a monotonically decreasing scalar function $m: [0,\infty) \rightarrow [0,1]$ such that $m(0) = 1$, and $m(\infty) = 0$. For every $\mathbf{x} \in \mathcal{X}^{n}$ choose a mapping $\bm{\alpha}(\cdot)$ such that
\begin{equation}
\label{setalpha}
\alpha_{i} = m\left(\mathop{\sup}_{\theta \in \Theta} \lvert f_{\theta}\left(x_{i}\right)\rvert\right) ,
\end{equation}
where $f_{\theta}\left(x_{i}\right)$ is computed from the unweighted, non-differentially private posterior synthesizer.  Under this procedure for selecting risk-based weights,
$\alpha_{i},~ i = 1,\ldots,n$, if $f_{\theta}\left(x_{i}\right)$ is non-finite for any $x_{i}$ and value of $\theta \in \Theta$, $\alpha_{i}$ is set to $m(\infty) =0$, which removes the contribution of database record, $i$, from the pseudo likelihood
of Equation~(\ref{pseudolike}) used to formulate the pseudo posterior mechanism of Equation~(\ref{pseudopost}).
\end{assumption}

The mapping $m(\cdot)$ in Assumption~\ref{ass:riskweight} includes threshold ($m(z) =  \mathbf{1}_{\{z < z^*\}}$) as well as smooth functions ($m(z) =  (z + 1)^{-1}$), providing us the flexibility for how to implement the weighting in practice. Since we remove the likelihood contributions for all database records with non-finite log-likelihoods by setting their associated weights in our pseudo posterior mechanism to $m(\infty) = 0$, our mechanism is guaranteed to satisfy Assumption~\ref{ass:lipschitz} with a finite $\Delta_{\bm{\alpha}} < \infty$ and thus be globally differentially private. This is a non-asymptotic result at every $n$; however we want to \emph{estimate} the global $\Delta_{\bm{\alpha}}$ (and, therefore, $\epsilon$), rather than simply knowing it exists.

We use Assumption~\ref{ass:riskweight} to implement our $\bm{\alpha}-$weighted pseudo posterior mechanism.  Fix a database, $\mathbf{x}$, and compute a record-indexed vector of log-likelihood ratios, $\vert f_{\theta}\left(x_{i}\right)\vert$ and linearly transform them to $\vert \tilde{f}_{\theta,i}\vert \in [0,1]$ such that records with lower values for $\vert f_{\theta,i} \vert$, that indicate lower identification risks, produce values of $\tilde{f}_{\theta,i}$ near $0$. We, next, set $\alpha_i =  c \times (1-\tilde{f}_{i}) + g$ where $c$ and $g$ may be used by the data provider to scale and shift the weights, respectively, restricted to $\alpha_{i} \in [0,1],~\forall i \in(1,\ldots,n)$ in order to achieve a desired  Lipschitz bound, $\Delta_{\mathbf{\alpha},\mathbf{x}}$, for database, $\mathbf{x}$ and the local DP privacy guarantee of $\epsilon_{\mathbf{x}} = 2\Delta_{\mathbf{\alpha},\mathbf{x}}$ (\citet{HuSavitskyWilliams2021rebds} demonstrate the uses of $c$ and $g$ to fine tune the risk-utility trade-off in non-differerentially private synthetic data settings). So the data provider indirectly controls the local privacy guarantee by formulating the weights.  We discuss an asymptotic method in Section~\ref{sec:lockin} that ``discovers" a global Lipschitz bound and associated global $\epsilon$ of an $(\epsilon,\pi)-$ probabilistic DP guarantee from a local result.  We show that the $\pi$, the probability of deviating from $\epsilon-$ DP, contracts onto $0$ for a sufficiently large sample size, $n$. 

In our application to a local data set $\mathbf{x}$ we may want to use a weighting scheme $\bm{\alpha}(\mathbf{x})$ which mildly violates the stated conditions in Assumption \ref{ass:lipschitz}. In particular, we consider estimation of $\alpha_i$ which weakly depends on $x_j$ for $i\ne j$, where this dependence attenuates asymptotically. For example, we use estimates of $\theta$ from an unweighted posterior distribution which weakly depend on all values of $x_i$. We see asymptotically that this dependence decays as $\theta$ collapses to a point, such that the results in this section apply to this case of weak dependence among the $\alpha_{i}$ except for minor updates to notation. 

\section{Computing a Local Lipschitz Bound}\label{sec:setweights}
In this section, we describe the implementation algorithm to compute the pseudo likelihood weights, $\bm{\alpha} = (\alpha_{1},\ldots,\alpha_{n})$ for a local database, $\mathbf{x}$, from the \emph{unweighted} synthesizer and the subsequent computation of the local Lipschitz bound, $\Delta_{\bm{\alpha},\mathbf{x}}$, associated with the pseudo posterior mechanism. In Section \ref{sec:lipschitz:ppEM}, we lay out the connection between the scalar-weighted pseudo posterior mechanism and the EM, with a discussion of the implications on the data utility of locally differentially private synthetic data generated under the two mechanisms.

\label{sec:lipschitz}

\begin{enumerate}
\item Compute weights $\bm{\alpha}$
	\begin{enumerate}
	\item Let $\lvert f_{\theta_{s},i} \rvert$ denote the absolute value of the log-likelihood computed from the unweighted pseudo posterior synthesizer for database record, $i \in (1,\ldots,n)$ and MCMC draw, $s \in (1,\ldots,S)$ of $\theta$.
	\item Compute the $S \times n$ matrix of by-record (absolute value of) log-likelihoods, $L = \left\{ \lvert f_{\theta_{s},i} \rvert \right\}_{i=1,\ldots,n,~s=1,\ldots,S}$.
\item Compute the maximum over each $S\times 1$ column of $L$ to produce the $n\times 1$ (database record-indexed) vector, $\mathbf{f} = \left(f_{1},\ldots,f_{n}\right)$.  We use a linear transformation of each $f_{i}$ to $\tilde{f}_{i} \in [0,1]$ where values of $\tilde{f}_{i}$ closer to $1$ indicates relatively higher identification disclosure risk:
$\tilde{f}_{i} = \frac{f_i - \min_j f_j}{\max_j f_j - \min_j f_j}$.
    \item We formulate by-record weights, $\bm{\alpha} = (\alpha_1, \cdots, \alpha_n)$,
    \begin{equation}
    \alpha_i =  c \times (1-\tilde{f}_{i}) + g,
    \label{eq:vectroweights}
    \end{equation}
    where $c$ and $g$ denote a scaling and a shift parameters, respectively, of the $\alpha_i$ used to tune the risk-utility trade-off.  If we set scaling tuning parameter, $c=1$ and shift tuning parameter, $g=0$, then each $\alpha_{i}$ is simply $(1-\tilde{f}_{i})$ such that the pseudo likelihood weights are solely a function of the record-indexed log likelihoods.  As discussed in \citet{HuSavitskyWilliams2021rebds}, decreasing $c < 1$ will compress the distribution of the $(\alpha_{i})$ while setting $g < 0$ will shift downward the distribution of the weights such that more weights will be close to $0$.  We use truncation to ensure each $\alpha_{i} \in [0,1]$.  These $\bm{\alpha}$ satisfy a slightly weaker asymptotic form of Assumptions ~\ref{ass:lipschitz} and \ref{ass:riskweight}. 

We will show in Section \ref{sec:app} the effects of different configurations of $c$ and $g$ on the risk and utility profiles of the differentially private synthetic dataset for the CE sample, generated under our proposed $\bm{\alpha}-$weighted pseudo posterior mechanism.
	\end{enumerate}
\item Compute Lipschitz bound, $\Delta_{\bm{\alpha},\mathbf{x}}$
	\begin{enumerate}
	\item Use $\bm{\alpha} = \left(\alpha_{1},\ldots,\alpha_{n}\right)$ to construct the pseudo likelihood of Equation~\ref{pseudolike} from which the pseudo posterior of Equation~\ref{pseudopost} is estimated.  Draw $(\theta_{s})_{s=1,\ldots S}$ from the $\bm{\alpha}-$weighted pseudo posterior distribution.
	\item As earlier, compute the $S\times~n$ matrix of log-pseudo likelihood values, $L^{\bm{\alpha}} = \left\{ \lvert f^{\bm{\alpha}}_{\theta_{s},i} \rvert \right\}_{i=1,\ldots,n,~s=1,\ldots,S}$
	\item Compute $\Delta_{\bm{\alpha},\mathbf{x}} = \mathop{\max}_{s,i} \lvert f^{\bm{\alpha}}_{\theta_{s},i} \rvert$.
	\end{enumerate}
\item Draw synthetic data, $\bm{\zeta}_{\ell}$, from the pseudo posterior distribution
	\begin{enumerate}
	\item Using the $(\theta_{s})_{s=1,\ldots S}$ drawn from the $\bm{\alpha}-$weighted pseudo posterior distribution estimated in the earlier step, randomly sample $\ell = 1,\ldots,(m = 20)$ parameter values and draw synthetic data value, $\zeta_{\ell,i} \ind p_{\theta_{\ell}}(\cdot)$ for parameter draw $\ell \in (1,\ldots,m)$ and database record $i \in (1,\ldots,n)$.  This step accomplishes a draw from the pseudo posterior predictive distribution.
	\item Release the synthetic data, $\bm{\zeta} = (\bm{\zeta}_1, \cdots, \bm{\zeta}_m)$, in place of the closely-held real data, $\mathbf{x}$.
	\end{enumerate}
\end{enumerate}

Our pseudo posterior mechanism \emph{indirectly} sets the local DP guarantee, $2\Delta_{\mathbf{\alpha},\mathbf{x}}$ through the computation and subsequent scaling and shifting of the likelihood weights, $\bm{\alpha}$.  

\subsection{Exponential Mechanism Reduces to Scalar Weighting}
\label{sec:lipschitz:ppEM}

\citet{WassermanZhou2010, ZhangRubinsteinDimitrakakis2016AAAI, SnokeSlavkovic2018PSD} use the EM to generate synthetic data with privacy guarantees from a non-private mechanism. Suppose we start with a non-private mechanism, such as an unweighted posterior synthesizer,
\begin{equation}
\xi\left(\theta \mid \mathbf{x},\gamma\right) \propto \left[\mathop{\prod}_{i=1}^{n}p\left(x_{i} \mid \theta\right)\right]\xi\left(\theta\mid \gamma\right).
\label{eq:unweighted}
\end{equation}

Under the set-up of \citet{ZhangRubinsteinDimitrakakis2016AAAI} that uses the log-likelihood function as the utility function, i.e. $u(\mathbf{x}, \theta) = \log\left(\mathop{\prod}_{i=1}^{n}p\left(x_{i} \mid \theta\right)\right)$, the EM generates private samples from
\begin{equation}
\hat{\theta} \propto \exp \left(\frac{\epsilon \, \log\left(\mathop{\prod}_{i=1}^{n}p\left(x_{i} \mid \theta\right)\right)}{2\Delta}\right) \xi\left(\theta\mid \gamma\right),
\end{equation}
where the prior, $\xi(\theta\mid \gamma)$, is chosen as the ``base" distribution as specified by \citet{McSherryTalwar2007} that ensures the EM produces a proper density function.  Furthermore,
\begin{eqnarray}
\exp \left(\frac{\epsilon \, \log\left(\mathop{\prod}_{i=1}^{n}p\left(x_{i} \mid \theta\right)\right)}{2\Delta}\right) \xi\left(\theta\mid \gamma\right)
&=& \exp (\log (\mathop{\prod}_{i=1}^{n}p\left(x_{i} \mid \theta\right))^{\frac{\epsilon}{2\Delta}}) \xi\left(\theta\mid \gamma\right) \nonumber \\
&=& \left(\mathop{\prod}_{i=1}^{n}p\left(x_{i} \mid \theta\right)^{\frac{\epsilon}{2\Delta}}\right) \xi\left(\theta\mid \gamma\right),
\label{eq:EMpp}
\end{eqnarray}
which demonstrates that the EM under a log-likelihood utility is equivalent to a risk-adjusted, scalar-weighted pseudo posterior synthesizer with scalar weight $\frac{\epsilon}{2\Delta}$, where $\alpha_i = \frac{\epsilon}{2\Delta},~\forall i \in (1,\ldots,n)$.  \citet{wang2015privacy} derived this same scalar-weighted result in their implementation of a gradient descent algorithm to sample the EM under a pseudo log-likelihood utility.

Using a scalar weight, $\alpha_i = \frac{\epsilon}{2\Delta},~\forall i \in (1,\ldots,n)$, shown in Equation (\ref{eq:EMpp}), we expect a resulting lower utility for synthetic data draws under this mechanism than we do under our $\bm{\alpha}-$weighted pseudo posterior shown in Equation (\ref{eq:pp}), which uses a vector of record-indexed weights. The $\bm{\alpha}-$weighted pseudo posterior is more surgical and concentrates the downweighting to records with higher risk, whereas the EM must downweight all records the same amount.  Downweighting all records the same amount will be conservative because the scalar weight is based on the \emph{worst case} sensitivity, $\Delta$, over the entire database of records and the parameter space, which is required to achieve a local DP privacy guarantee,  and not tuned to the risk ($\tilde{f}_{i}$) of each record.

The re-casting of the EM as a scalar-weighted pseudo likelihood under a log-likelihood utility also provides insight into why our $\bm{\alpha}-$weighted pseudo posterior mechanism sets the $\epsilon-$ DP guarantee indirectly through specification of the vector of weights, $\bm{\alpha} = (\alpha_{1},\ldots,\alpha_{n})$, that determines $\Delta_{\bm{\alpha}}$, which in turn, determines $\epsilon = 2\Delta_{\bm{\alpha}}$.  Since the commonly-used EM utilizes a single, scalar weight for all records, it is straightforward to directly set $\epsilon$, but at a tremendous loss of efficiency in terms of risk-utility trade-off as compared to the $\bm{\alpha}-$weighted pseudo posterior mechanism.  So our mechanism achieves a higher utility for an equivalent guarantee, $\epsilon$. 

We illustrate in Section \ref{sec:app} the reduction in utility of the local differentially private synthetic dataset generated under the EM, compared to that generated under our proposed $\bm{\alpha}-$weighted pseudo posterior mechanism, at an equivalent privacy guarantee for both mechanisms.

\section{Turning A Local Bound into A Global Bound}\label{sec:lockin}

In this section we proceed to demonstrate that a local Lipschitz bound or sensitivity, $\Delta_{\bm{\alpha},\mathbf{x}}$, computed on observed database, $\mathbf{x}$, contracts on or becomes arbitrarily close to $\Delta_{\bm{\alpha}}$, the global Lipschitz bound or supremum over the space of databases, $\mathcal{X}^{n}$, for sample size, $n$, sufficiently large.

\subsection{Asymptotic Convergence of Local Lipschitz to Global Lipschitz }\label{sec:asympt}

Although our DP result is non-asymptotic for every $n$, in the sense that we have earlier shown that a \emph{finite} global $\Delta_{\bm{\alpha}}$ is guaranteed to exist under our $\bm{\alpha}-$weighted pseudo posterior mechanism, we nevertheless do not \emph{know} its value.  We employ asymptotics to learn the global Lipschitz bound, $\Delta_{\bm{\alpha}}$, to any degree of desired precision.  We develop a contraction result for any $\bm{\alpha}-$weighted pseudo distribution to demonstrate under a set of conditions that convergence of the pseudo posterior distribution leads to asymptotic convergence of the local Lipschitz bound, $\Delta_{\bm{\alpha},\mathbf{x}}$, to the global bound, $\Delta_{\bm{\alpha}}$ in $P_{\theta_{0}}-$probability for $n$ sufficiently large.

Our asymptotic contraction of the local Lipschitz bound onto the global Lipschitz bound (that has a direct functional relationship to the global privacy guarantee, $\epsilon$) does not provide a global $\epsilon-$ DP \emph{guarantee} because there is the possibility of leakage of private information, $\pi$, at any fixed sample size such that our computed $\epsilon$ on a local database may be exceeded.  Therefore, we employ our asymptotic result on the contraction of Lipschitz bounds to claim an $(\epsilon,\pi)-$ probabilitistic DP guarantee where $\delta$ represents a probability that there are some databases in the space of databases for which $\epsilon$ is exceeded.  Under our asymptotic contraction of local Lipschitz bounds to the global bound, we achieve that $\pi$ contracts onto $0$.

We formally introduce a definition for probabilistic differential privacy (pDP) that adapts the formulation of  \citet{4497436} to our $\bm{\alpha}-$weighted pseudo posterior mechanism.

\begin{definition}
(Probabilistic Differential Privacy)
Let $\epsilon > 0$ and $ 0 < \pi < 1$.
We say that our pseudo posterior mechanism is $(\epsilon,\pi)$-probabilistically differentially private (pDP) if $\forall \mathbf{x} \in \mathcal{X}^{n}$,
\begin{equation}\label{eq:pDP}
 \mbox{Pr}\left(\mathbf{x} \in \mbox{Disc}(\mathbf{x},\epsilon)\right) \leq \pi, \nonumber
\end{equation}
\end{definition}
\noindent where the probability is taken over $\mathbf{x} \in \mathcal{X}^{n}$ and $\mbox{Disc}(\mathbf{x},\epsilon)$ denotes
the \emph{disclosure} set, \newline $\{\mathbf{x}\in\mathcal{X}^{n}:
\mathop{\sup}_{B \in \beta_{\Theta}} \log\left(\frac{\xi^{\bm{\alpha}(\mathbf{x})}(B \mid \mathbf{x})}{\xi^{\bm{\alpha}(\mathbf{y})}(B \mid \mathbf{y})}\right) > \epsilon ,~\forall \mathbf{y}:\delta(\mathbf{x}, \mathbf{y}) = 1\}$,
the subspace of $\mathcal{X}^{n}$ where our $\bm{\alpha}-$weighted pseudo posterior mechanism exceeds an $\epsilon-$DP guarantee.

This definition constructs a probability for the event that there are \emph{any} databases in the space of databases for which our pseudo posterior mechanism \emph{exceeds} $\epsilon$ under the leave-one-out (LOO) distance.
We recall that our vector weights, $\bm{\alpha} = (\alpha_{1},\ldots,\alpha_{n})$ determine $\Delta_{\bm{\alpha}}$, which \emph{indirectly} sets $\epsilon \leq 2\Delta_{\bm{\alpha}}$.   Our asymptotic result on the contraction of the local to global Lipschitz bound, presented in this section, reveals that $\pi$, which represents the (maximum) probability that $\epsilon-$ DP is exceeded, limits to $0$ in $P_{\theta_{0}}-$ probability.

We verify our theoretical result by conducting a simulation study in Section \ref{asympt:sim} that demonstrates the contraction of the distribution for the local $\Delta_{\bm{\alpha},\mathbf{x}}$ for a relatively moderate sample sizes. Furthermore, we suggest a procedure for selecting a global $\epsilon$ that would result in a very small-to-negligible $\pi$.

\subsection{Preliminaries}
We next demonstrate the frequentist properties of our pseudo posterior Bayesian estimator.  We generalize the result of \citet{BFP2019AS} developed for a fixed, scalar weight to our vector of record-indexed weights that depend on the closely-held data.  Suppose $x_{1},\ldots,x_{n} \ind P_{\theta_{0}}$ for $\theta_{0} \in \Theta$. Under \emph{frequentist} consistency, the $\mathbf{x} = (x_{1},\ldots,x_{n})$ are random with respect to $P_{\theta_{0}}$ (for fixed $\theta_{0}$), so taking probabilities and expectations with respect to $P_{\theta_{0}}$ requires us to address the dependence of $\alpha_{i}$ on $\mathbf{x}$ to construct the contraction rate for correctness and thoroughness. We drop the notation denoting the explicit dependence of $\alpha_i(x_i)$ for exposition of our consistency results in the sequel and just use $\alpha_{i}$ for readability when the context is clear.

Since our pseudo posterior formulation induces misspecification, we allow the true generating parameters, $\theta_{0}$, to lie outside the parameter space, $\Theta$.   We will show in the sequel that our model contracts on $\theta^{\ast} \in \Theta$ in $P_{\theta_{0}}-$probability, where $\theta^{\ast}$ is the point that minimizes the Kullback-Liebler (KL) divergence from $P_{\theta_{0}}$; that is,
\begin{equation}\label{misMLE}
\theta^{\ast} := \mathop{\arg\min}_{\theta\in\Theta}D\left(p_{\theta},p_{\theta_{0}}\right),
\end{equation}
where $D(p,q) = \int p\log(p/q)d\mu$ for dominating measure, $\mu$.

Our asymptotic result on the contraction in $P_{\theta_{0}}-$probability relies on bounding the $\bm{\alpha}-$R\'{e}nyi divergence measure,
\begin{equation}
D^{(n)}_{\theta_{0},\bm{\alpha}}\left(\theta,\theta^{\ast}\right) = \mathop{\sum}_{i=1}^{n}D_{\theta_{0},\alpha,i}\left(\theta,\theta^{\ast}\right) = \mathop{\sum}_{i=1}^{n}\frac{1}{\alpha_{i}-1}\log\left\{A_{\theta_{0},\alpha,i}\left(\theta,\theta^{\ast}\right)\right\},
\end{equation}
where $A_{\theta_{0},\alpha,i}\left(\theta,\theta^{\ast}\right) = \mathop{\int} \left(\frac{p_{\theta_{i}}}{p_{\theta^{\ast}_{i}}}\right)^{\alpha_{i}}p_{\theta_{0},i}d\mu_{i}$ under dominating measure $\mu_{i}$
is defined as the $\bm{\alpha}-$affinity for record, $x_{i}$, such that $A^{(n)}_{\theta_{0},\bm{\alpha}}\left(\theta,\theta^{\ast}\right) = \mathop{\prod}_{i=1}^{n}A_{\theta_{0},\alpha,i}\left(\theta,\theta^{\ast}\right)$, the $\bm{\alpha}-$affinity for the product measure space, where we have updated definitions from use of a scalar, $\alpha$, to record-indexed $\alpha_{i}$.

The posterior probability of the $\bm{\alpha}-$R\'{e}nyi distance between $\theta\in\Theta$ and the point $\theta^{\ast}$ limits to $0$ at a rate that is a function of $n$ for any weighting scheme, $\bm{\alpha}(\mathbf{x})$, where the construction of $\bm{\alpha}$ depends on the observed data, $\mathbf{x}$, as does ours.  We require the following two conditions to achieve contraction of the local $\Delta_{\bm{\alpha},\mathbf{x}}$ to the global $\Delta_{\bm{\alpha}}$:

\begin{assumption}
\label{prior}
(Prior mass covering truth)
We construct a KL neighborhood of $\theta^{\ast}$ with radius, $\eta$,with,
\begin{multline}
B_{n}\left(\theta^{\ast},\eta;\theta_{0}\right) = \left\{\theta\in\Theta: \mathop{\sum}_{i=1}^{n}\int p_{\theta_{0},i}\log\left(p_{\theta^{\ast}_{i}} / p_{\theta_{i}}\right)d\mu_{i}\leq n\eta^{2},\right.\\
 \left.\mathop{\sum}_{i=1}^{n}\int p_{\theta_{0},i}\log^{2}\left(p_{\theta^{\ast}_{i}} / p_{\theta_{i}}\right)d\mu_{i}\leq n\eta^{2}~\right\}.
\end{multline}
Restrict the prior, $\xi$, to place positive probability on this KL neighborhood,
\begin{equation}\label{smallballprior}
  \xi\left(B_{n}\left(\theta^{\ast},\eta;\theta_{0}\right)\right) \geq e^{-n\tau_{n}^{2}}.
\end{equation}
\end{assumption}

\begin{assumption}
\label{size}
(Control size of $\bm{\alpha}$)
Let $A_{n} := \displaystyle\left\{i: \alpha_{i} < 1^{-}; i \in 1,\ldots,n\right\}$ and $n_{A} := \vert A_{n} \vert$, where $\vert A_{n} \vert$ denotes the number of elements in $A_{n}$.
Let \\ $Q_{n} := \displaystyle\left\{i: \alpha_{i} = \alpha^{(n)} \geq 1^{-}; i \in 1,\ldots,n\right\}$ for some constant $\alpha^{(n)}$ and $n_{Q} := \vert Q_{n} \vert$.
\begin{align*}
&\displaystyle\lim\mathop{\sup}_{n}\left\lvert A_{n}\right\rvert = \lim\mathop{\sup}_{n} n_{A} = \mathcal{O}\left(n^{\frac{1}{2}}\right), \text{ with $P_{\theta_{0}}-$probability $1$},\\
&\lim\mathop{\sup}_{n} (1-\alpha^{(n)}) = \mathcal{O}\left(n_{Q}^{-\frac{1}{2}}\right), \text{ with $P_{\theta_{0}}-$probability $1$},
\end{align*}
such that for constants $C_{1}, C_{3} > 0$ and $n$ sufficiently large,
\begin{align*}
&\displaystyle\mathop{\sup}_{n}\vert A_{n} \vert \leq C_{1}n^{\frac{1}{2}},\\
&\displaystyle\mathop{\sup}_{n}(1-\alpha^{(n)}) \leq C_{3}\tau_{n}n_{Q}^{-\frac{1}{2}}.
\end{align*}
\end{assumption}
These two assumptions are required for consistency of our $\bm{\alpha}-$pseudo posterior mechanism at $\theta^{\ast}$.  The first assumption requires the prior to place some mass on a KL ball near $\theta^{\ast}$ as defined in Equation~(\ref{misMLE}). The second assumption outlines a dyadic subgrouping of data records, where $A_{n}$ contains those records whose likelihood contributions are downweighted to lessen the estimated identification disclosure risk (and improve privacy) for those records in the resulting synthetic data.  The second subset of records, $Q_{n}$, contains those records that are minimally downweighted due to nearly zero values for identification disclosure risks.  Since $\alpha_{i} \leq 1,~\forall i\in (1,\ldots,n)$, the constant value, $\alpha^{(n)}$, for all units in $Q_{n}$ approaches $1$ from the left.  We show that the consistency result to $\theta^{\ast}$ for the synthesizer is dominated by the likelihood weighting for records in the downweighted set, $A_{n}$.  Assumption~\ref{size} restricts the number of downweighted records (where $\alpha_{i} < 1^{-}$) to grow at a slower rate than the sample size, $n$, such that the downweighting becomes relatively more sparse.  Our experience demonstrates that when weights are constructed based on disclosure risks, downweighting is confined to isolated records, which are sparse.

\begin{theorem}
\label{th:consistency}
(Contraction of the $\bm{\alpha}-$pseudo posterior distribution).
\newline Let $\displaystyle\bm{\alpha} = \left(\alpha_{1} \in [0,1],\ldots,\alpha_{n}\in[0,1]\right)$.  Define $\displaystyle\alpha_{m} := \max_{i\in A_{n}}\alpha_{i} \in [0,1]$ and $\displaystyle\alpha_{l} := \min_{i\in A_{n}}\alpha_{i} \in [0,1]$.  Let $D^{(n_{A})}_{\theta_{0},\bm{\alpha}}\left(\theta,\theta^{\ast}\right) = \sum_{i\in A_{n}}D_{\theta_{0},\alpha,i}$ and $D^{(n_{Q})}_{\theta_{0},1^{-}}\left(\theta,\theta^{\ast}\right) = \sum_{i\in Q_{n}}D_{\theta_{0},1^{-},i}$.   Let $\theta^{\ast}$ be as defined in Equation~(\ref{misMLE}).  Assume that $\tau_{n}$ satisfies $n\tau_{n}^{2} \geq 2$ and suppose Assumptions~\ref{prior} and \ref{size} hold.  Let $C^{\ast}_{1} = \sqrt{2+C_{1}^{2}+C_{3}^{2}} \geq \sqrt{2}$. Then for any $D \geq 2$ and $t > 0$,
\begin{equation}\label{postcontract}
\xi^{\bm{\alpha}}\left(\frac{1}{n}\left[(1-\alpha_{m})D^{(n_{A})}_{\theta_{0},\bm{\alpha}}\left(\theta,\theta^{\ast}\right) + (1-\alpha^{(n)})D^{(n_{Q})}_{\theta_{0},1^{-}}\left(\theta,\theta^{\ast}\right) \right]\geq (D+3t)\tau_{n}^{2}\big\vert \mathbf{x}\right) \leq e^{-t n\tau_{n}^{2}},
\end{equation}
hold with $P_{\theta_{0}}-$probability at least $1-\left[(\alpha_{l}^{2}+2) (C^{\ast}_{1})^{2} / \alpha_{m}^{2} \times 2/\left\{(D+t-1)^{2}n\tau_{n}^{2}\right\}\right]$.
\end{theorem}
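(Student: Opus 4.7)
The plan is to follow the standard Ghosal--van der Vaart posterior contraction template as generalized to the weighted pseudo-likelihood setting by \citet{BFP2019AS}, keeping careful track of the record-indexed weights and the dyadic split $A_n \cup Q_n$ from Assumption~\ref{size}. Write the pseudo-posterior mass of the ``bad'' set as a ratio $\xi^{\bm{\alpha}}(B \mid \mathbf{x}) = N_n / D_n$, where $N_n = \int_B r_n(\theta)\,d\xi(\theta)$, $D_n = \int_\Theta r_n(\theta)\,d\xi(\theta)$, and $r_n(\theta) = \prod_{i=1}^n [p_{\theta_i}(x_i)/p_{\theta^{\ast}_i}(x_i)]^{\alpha_i}$. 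The goal is to show that $N_n$ is exponentially small in $P_{\theta_0}$-expectation and that $D_n$ is bounded below with high $P_{\theta_0}$-probability, then combine.

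For the numerator, I would apply Fubini and the $\bm{\alpha}$-affinity identity $E_{\theta_0}[(p_{\theta_i}/p_{\theta^{\ast}_i})^{\alpha_i}] = \exp(-(1-\alpha_i)\,D_{\theta_0,\alpha,i}(\theta,\theta^\ast))$ to obtain
$$
E_{\theta_0}[N_n] = \int_B \exp\!\left(-\sum_{i=1}^n (1-\alpha_i)\,D_{\theta_0,\alpha,i}(\theta,\theta^\ast)\right) d\xi(\theta).
$$
Each $D_{\theta_0,\alpha,i} \geq 0$, and by construction $1-\alpha_i \geq 1-\alpha_m$ on $A_n$ while $1-\alpha_i = 1-\alpha^{(n)}$ on $Q_n$, so on $B$ the exponent is at least $n(D+3t)\tau_n^2$, giving $E_{\theta_0}[N_n] \leq e^{-n(D+3t)\tau_n^2}$. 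Markov's inequality then yields $N_n \leq e^{-n(D+2t)\tau_n^2}$ with $P_{\theta_0}$-probability at least $1 - e^{-tn\tau_n^2}$.

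For the denominator, restrict the integral to the KL neighborhood $B_n(\theta^\ast, \tau_n; \theta_0)$ from Assumption~\ref{prior}. By Jensen's inequality applied to the concave $\log$,
$$
D_n \geq \int_{B_n} r_n\,d\xi \geq \xi(B_n)\,\exp(W), \qquad W := \int_{B_n} \log r_n(\theta)\,d\tilde{\xi}(\theta),
$$
with $\tilde{\xi}$ the normalized restriction of $\xi$ to $B_n$. Assumption~\ref{prior} supplies $\xi(B_n) \geq e^{-n\tau_n^2}$, so it suffices to show $W \geq -(D+t-1)n\tau_n^2$ with high $P_{\theta_0}$-probability via Chebyshev. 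Swapping $E_{\theta_0}$ with the (deterministic) $d\tilde{\xi}$-integral and using $\alpha_i \leq 1$ together with the first KL inequality and a Cauchy--Schwarz adjustment controls $|E_{\theta_0}[W]|$; independence of the $x_i$ and the second KL inequality control $\mathrm{Var}_{\theta_0}(W)$ in terms of $\int_{B_n} \sum_i \alpha_i^2\, \int p_{\theta_0,i}\log^2(p_{\theta^\ast_i}/p_{\theta_i})\,d\mu_i\,d\tilde{\xi}(\theta)$.

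The main obstacle is the variance bookkeeping that produces the claimed prefactor. Splitting the sum across $A_n$ and $Q_n$, bounding $\alpha_i^2$ uniformly by $\alpha_m^2$ on $A_n$ and by $(\alpha^{(n)})^2 \leq 1$ on $Q_n$, and invoking Assumption~\ref{size} ($|A_n| \leq C_1 n^{1/2}$ together with $1-\alpha^{(n)} \leq C_3 \tau_n n_Q^{-1/2}$), I expect a variance bound of order $(2 + C_1^2 + C_3^2)\,n\tau_n^2 = (C_1^\ast)^2\,n\tau_n^2$, scaled by a factor $(\alpha_l^2+2)/\alpha_m^2$ that accounts for the minimum-versus-maximum gap of the $\alpha_i$ inside $A_n$. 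Chebyshev at level $(D+t-1)n\tau_n^2$ then gives $W \geq -(D+t-1)n\tau_n^2$ outside an event of $P_{\theta_0}$-probability at most $\tfrac{(\alpha_l^2+2)(C_1^\ast)^2}{\alpha_m^2}\cdot\tfrac{2}{(D+t-1)^2 n\tau_n^2}$. Intersecting with the Markov good event for $N_n$ yields
$$
\frac{N_n}{D_n} \leq \frac{e^{-n(D+2t)\tau_n^2}}{e^{-n\tau_n^2}\,e^{-(D+t-1)n\tau_n^2}} = e^{-tn\tau_n^2},
$$
which is the claimed bound, with the exponentially small Markov failure probability absorbed into the polynomial Chebyshev failure probability. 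The weak dependence of $\alpha_i$ on $\mathbf{x}$ noted by the authors before Section~\ref{sec:setweights} attenuates as the pseudo-posterior contracts onto $\theta^\ast$, so the independence appealed to in the variance computation is valid up to negligible corrections in the notation.
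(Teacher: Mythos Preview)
Your proposal is correct and follows essentially the same architecture as the paper's proof: write $\xi^{\bm{\alpha}}(U_n\mid\mathbf{x})=N_n/D_n$, bound $E_{\theta_0}[N_n]$ via Fubini and the affinity identity, apply Markov to $N_n$, bound $D_n$ from below via Jensen on $B_n(\theta^\ast,\tau_n;\theta_0)$ followed by Chebyshev, and combine so that the exponential Markov exception is absorbed into the polynomial Chebyshev exception. The one place where your sketch diverges from the paper is the second-moment bookkeeping for the denominator: you propose bounding $\mathrm{Var}_{\theta_0}(W)$ directly through $\sum_i \alpha_i^2\int p_{\theta_0,i}\log^2(p_{\theta^\ast_i}/p_{\theta_i})\,d\mu_i$, whereas the paper instead decomposes the weighted empirical process as $\mathbb{G}_{n,\bm{\alpha}}=\sqrt{n}(\mathbb{P}_{n,\bm{\alpha}}-\mathbb{P}_n)+\mathbb{G}_n$, so that the first piece carries factors $(\alpha_i-1)$ and the cross terms over $A_n$ and $Q_n$ are controlled by Assumption~\ref{size}, which is precisely how the constants $C_1^2$, $C_3^2$ and the $(\alpha_l^2+2)/\alpha_m^2$ prefactor enter. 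Your direct $\alpha_i^2$ bound would in fact give a cleaner variance estimate ($\leq n\tau_n^2$), but it does not reproduce the stated constants; to match the theorem as written you need the paper's weighted-minus-unweighted split.
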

Since $(1-\alpha^{(n)}) = \mathcal{O}(n_{Q}^{-1/2})$, while $n_{A} = \mathcal{O}(n^{1/2})$, the first term dominates with increasing $n$, so that the $(1-\alpha_{m})^{-1}$ is the dominating penalty on the $\tau_{n}$ contraction rate of the $\bm{\alpha}-$pseudo posterior onto $\theta^{\ast}$.   Even though the downweighting becomes relatively more sparse due to Assumption~\ref{size}, it is the maximum value of $\alpha_{i}$ for $i \in A_{n}$ on the set of downweighted records that penalizes the rate. We observe that the rate of contraction is injured by factor, $(1-\alpha_{m})^{-1}$.  Since $\alpha_{i} \leq 1^{-},~\forall i \in A_{n}$, our result generalizes \citet{BFP2019AS} to allow a tempering of a \emph{portion} of the posterior distribution and there is a penalty to be paid in terms of contraction rate for the tempering.
Since we induce the misspecification through the weights, $\bm{\alpha}$, the distance of the point of contraction, $\theta^{\ast}$ from the true generating parameters, $\theta_{0}$, and the contraction rate on this point are \emph{both} impacted by the induced misspecification.  The requirement for increasing sparsity in the number of downweighted record likelihood contributions, however, ensures that $\theta^{\ast}$ will be relatively close to $\theta_{0}$ that produces a high utility for our (pseudo posterior) estimator.

If we plug in for $\tau_{n}$, we see that our contraction of $\Theta$ to $\theta^{\ast}$ occurs at a rate that is of $\mathcal{O}(n^{-1/2})$.

\subsection{Contraction of Local Lipschitz bound onto Global bound}
Asymptotically Theorem~\ref{th:consistency} guarantees that the space $\theta\in\Theta$ collapses onto $\theta^{\ast}$ for $n$ sufficiently large.  The space of databases, $\mathcal{X}^{n}$, drawn under this distribution collapses unto a single distribution, $\tilde{x} \ind P^{\bm{\alpha}}(x|\theta^{\ast})$ with density  $p^{\bm{\alpha}}(x|\theta^{\ast}) \propto \exp(f^{\bm{\alpha}}_{\theta^{\ast}}(x))$. The term $\tilde{x}$ denotes the risk-corrected version of $\mathbf{x}$ under which high disclosure risk records are less likely to be drawn due to their downweighting. 
High-risk records are isolated relative to other records located in regions of the unweighted generating distribution, $P_{\theta_{0}}$, of low-probability mass, such as the tails.
Since the contraction of the pseudo posterior distribution induces the collapsing of the parameter space to a point and the space of databases to a single distribution (conditioned on $\theta^{\ast}$) for large $n$, this result guarantees that the local Lipschitz bound, $\Delta_{\bm{\alpha},\mathbf{x}}$ and the the global bound $\Delta_{\bm{\alpha}}$ contract together for $n$ sufficiently large. First, we revisit the different forms of the $\Delta$ bound:
\begin{align}
\Delta_{\bm{\alpha}} &= \sup_{\theta \in\Theta} \left\{\inf \left\{w: \abs{f^{\bm{\alpha}(x_i)}_{\theta}(x_i)} \leq w, \forall x_i \in \mathbf{x},~\forall \mathbf{x} \in \mathcal{X}^{n} \right\} \right\}\label{wholespace}\\
 &= \sup_{\theta \sim \xi^{\alpha}(\theta|\mathbf{x})} \left\{\inf \left\{w: \abs{f^{\bm{\alpha}(x_i)}_{\theta}(x_i)} \leq w, \forall x_i \in \mathbf{x},~\forall \mathbf{x} \in \mathcal{X}^{n} \right\} \right\}\label{global}\\
\end{align}

The formulation of $\Delta_{\bm{\alpha}}$ comes directly from Assumption \ref{ass:lipschitz}. This is a bound of the $\alpha$-weighted log-likelihood over the full support of both $\theta$ and $\mathbf{x}$. Note that we can replace the $\sup$ over $\Theta$ in Equation~\ref{wholespace} with the $\sup$ over infinite draws from the pseudo posterior to achieve the same result in Equation~\ref{global} because the support is still the entire space, $\Theta$, for any finite $n$.

The local $\Delta_{\bm{\alpha}, \mathbf{x}}$ in Equation~\ref{local} is a random quantity based on only the observed values $x_i$ from $n$ draws from the generating distribution $P(x|\theta_0)$, and a fixed finite number $S$ of draws $\theta_s$ from the posterior  $\xi^{\alpha}(\theta|\mathbf{x})$. Finally, $\Delta^{\infty}_{\bm{\alpha},\mathbf{x}}$ in Equation~\ref{final} expresses the bound of the log-likelihood based on the observed database values evaluated at the limiting point $\theta^*$.

\begin{align}
 \Delta_{\bm{\alpha},\mathbf{x}} &= \max_{\theta_s \sim \xi^{\alpha}(\theta|\mathbf{x})} \left\{\inf \left\{w: \abs{f^{\bm{\alpha}(x_i)}_{\theta}(x_i)} \leq w, i \in \{1, \ldots n\}, x_i \sim P(x|\theta_0) \right\} \right\}\label{local}\\
    \Delta^{\infty}_{\bm{\alpha},\mathbf{x}} & = \inf \left\{w: \abs{f^{\bm{\alpha}(x_i)}_{\theta^{\ast}}(x_i)} \leq w, i \in \{1, \ldots \infty\}, x_i \sim P(x|\theta_0) \right\} \label{final}  
\end{align}

%
Then based on Theorem~\ref{th:consistency} both
$P_{\theta_0}\{\abs{\Delta_{\bm{\alpha},\mathbf{x}} - \Delta^{\infty}_{\bm{\alpha},\mathbf{x}}} > 0 \}  \rightarrow 0$
and 
$P_{\theta_0}\{\abs{\Delta_{\bm{\alpha}} - \Delta^{\infty}_{\bm{\alpha},\mathbf{x}}} > 0 \}  \rightarrow 0$
because the pseudo posterior degenerates to a point mass at $\theta^*$. 
Thus
\begin{equation}
P_{\theta_0}\{\abs{\Delta_{\bm{\alpha},\mathbf{x}} - \Delta_{\bm{\alpha}}} > 0 \}  \rightarrow 0
\end{equation}

Assumption~\ref{ass:riskweight} ensures a formal privacy guarantee since $\Delta_{\bm{\alpha}} < \infty$, by construction, and the asymptotic result guarantees that the local $\Delta_{\bm{\alpha},\mathbf{x}}$ will get arbitrarily close to the global $\Delta_{\bm{\alpha}}$ where $\epsilon = 2\Delta_{\bm{\alpha}}$.  For a large $n$, then, $\Delta_{\bm{\alpha},\mathbf{x}} \rightarrow \Delta_{\bm{\alpha}}$ becomes independent of $\mathbf{x} \in \mathcal{X}^{n}$, where we recall that $\epsilon = 2\Delta_{\alpha}$.  This contraction of the local Lipschitz bound onto a global value that determines the privacy guarantee, $\epsilon$, indicates that $\pi$ of our $(\epsilon,\pi)-$ pDP guarantee in Equation~\ref{eq:pDP} contracts onto $0$ at $\mathcal{O}(n^{-1/2})$ rate at which $\Delta_{\bm{\alpha},\mathbf{x}}$ contracts onto $\Delta_{\bm{\alpha}}$. 
To speed convergence and add stability for $\Delta_{\bm{\alpha}, \mathbf{x}}$ for finite $n$, we consider employing a threshold $M$ for the $\alpha$-weighted log-likelihood, such that $\abs{f^{\bm{\alpha}(x_i)}_{\theta}(x_i)} > M$ is replaced by 0, through setting $\bm{\alpha}(x_i) = 0$. 

To make intuitive the rate of contraction of $\pi$ to $0$ at $\mathcal{O}(n^{-1/2})$, we conduct a Monte Carlo simulation study next to develop a distribution of local Lipschitz bounds from which we compute the local Lipschitz, $\Delta_{\bm{\alpha},\mathbf{x}}$, each at an increasing sequence of sample sizes, $n$.  We reveal that the distribution over local Lipschitz bounds contracts together onto a single global value, demonstrating the local-to-global contraction as $n$ increases. In particular, the use of the $M$ threshold greatly stabilizes and speeds convergence.

\subsection{Asymptotic Differential Privacy Guarantee}
We have noted that Theorem~\ref{th:consistency} induces the contraction of $\Delta_{\bm{\alpha},\mathbf{x}}$ computed on database, $\mathbf{x}$, to the global Lipschitz bound, $\Delta_{\bm{\alpha}}$.  This contraction is driven by the collapsing of the parameter space, $\Theta$, to a point, $\theta^{\ast}$, asymptotically in data size, $n$.   

Our implementation for computing the by-record Lipschitz bound, $\displaystyle\mathop{\sup_{\theta\in\Theta}}\vert f(x_{i}\mid\theta)\vert$, on a database relies on this asymptotic convergence.  For implementation on a database we evaluate
$\displaystyle\mathop{\sup_{\theta \in \xi^{\alpha}(\theta|\mathbf{x})}}\vert f(x_{i}\mid\theta)\vert$; that is, we compute the supremum of the absolute value of the log-likelihood over the subset of $\Theta$ that receives positive posterior mass.   This subset shrinks to a point for $n$ sufficiently large, making our treatment conservative for large $n$. Our de facto truncation of $\Theta$ to that subset receiving positive posterior measure is similar to the probabilistic Lipschitz condition of Assumption 2 in \citet{Dimitrakakis:2017:DPB:3122009.3122020}; only, in our implementation of differential privacy guarantee is asymptotic such that we rely on the shrinking of the size of $\Theta$ with increasing $n$ to state our privacy guarantee. Our use of a threshold $M$ speeds this convergence, making it useful for moderate sample sizes. This indirect truncation of $\Theta$ is much simpler to implement compared to specifying a meaningful truncated prior distribution in high-dimension.

\subsection{Simulation Study}\label{asympt:sim}

We next utilize a Monte Carlo simulation study by fixing a sample size, $n$,  and repeatedly generating a count data sample from a Poisson generating model. We proceed to compute the local Lipschitz bound for each sample database for the $\bm{\alpha}-$weighted pseudo posterior mechanism and also the unweighted posterior synthesizer to provide a comparison.  This procedure gives us a distribution of the local Lipchitz bounds across databases of size $n$. We repeat this process for an increasing sequence of sample sizes, $\mathbf{n} = 100*4^{(0,1,2,3)} = (100,400,1600,6400)$.

In addition to computing the local Lipschitz bounds at each $n$ for the $\bm{\alpha}-$weighted pseudo posterior mechanism, we introduce an extension to our pseudo posterior mechanism that truncates the weight, $\alpha_{i}$, for each likelihood contribution in the following procedure:
\begin{enumerate}
\item Compute weights, $\bm{\alpha}$, for local database, $\mathbf{x}$, using the procedure of Section~\ref{sec:setweights}.  We first compute $f_{i}$ (the maximum of absolute log-likelihood values for record, $i$, over the sampled values of $\theta_{s}$) for each database record, $i \in (1,\ldots,n)$, from the unweighted posterior mechanism. Then using the linear transform, $\alpha_{i} = 1 - \tilde{f}_{i}$, where $\tilde{f}_{i} = \frac{f_i - \min_j f_j}{\max_j f_j - \min_j f_j}$.
\item We add a step to truncate the weight for any record whose weighted log-pseudo likelihood value is greater than some threshold, $M$, to $0$, completely removing the likelihood contribution for record $i$.  We accomplish this truncation by forming a weighted absolute log-pseudo likelihood for each record, $i$,  as $\alpha_{i} \times f_{i}$. If $\alpha_{i} \times f_{i} > M$, we set final weight, $\alpha^{\ast}_{i} = 0$; otherwise we leave $\alpha^{\ast}_{i} = \alpha_{i}$ unchanged.  The motivation for this method is to more tightly control or correct the local Lipschitz to $M$ that we will observe in the sequel speeds convergence.
We choose $M$ based on oracle information based on experience with databases of similar types.

\end{enumerate}
The use of a threshold, $M$, to truncate weights is a stricter implementation from our weight-setting procedure of Assumption~\ref{ass:riskweight}.  We recall that this assumption guarantees the existence of a global Lipschitz because for every database it sets the weight for a record with a non-finite absolute log likelihood to $0$.  In this stricter implementation, we set $\alpha^{\ast}_{i} = 0$ if its weighted absolute log-pseudo likelihood is $> M$, where we choose $M$ based on oracle information based on experience with databases of similar types.

Using the means model for Poisson distributed data, $y \sim Pois(\mu)$ (with $\mu = 100$) our simulation procedure is, as follows.
\begin{enumerate}
\item For sample size, $n \in \{100,400,1600,6400\}$, repeat the following Monte Carlo procedure to generate a distribution of local Lipschitz bounds:
\item For $r = 1,\ldots, 400$:
\begin{itemize}[noitemsep]
	\item Generate $\mathbf{y}_r \sim \text{Pois}(\mu)$, each of size $n$.
	\item Compute the \emph{local} Lipschitz bound, $\Delta_{\bm{\alpha},\mathbf{y}}$, for the unweighted, $\bm{\alpha}-$weighted, and $M-$truncation-weighted pseudo posterior mechanisms.
    \item  Construct the distribution of $\Delta_{\bm{\alpha},\mathbf{y}_{r}}$ and note the maximum of the distribution and difference between the maximum and minimum values of the distribution of the local Lipschitz bounds at each sample size, $n$.
\end{itemize}
\item Assess contraction of the $\max_{r}\Delta_{\bm{\alpha},\mathbf{y}_{r}}$ to a single (global) value and whether the minimum and maximum values collapse together.
\end{enumerate}

To assess the \emph{contraction} of the maximum point in the distribution of \emph{local} Lipschitz bounds to the \emph{global} Lipschitz bound, we repeat the simulation above using sample sizes $\mathbf{n} = (100, 400, 1600,6400)$. Figure \ref{fig:lockLmain} compares the distributions across the $R = 400$ replications for the unweighted (labeled as ``Unweighted"), $\bm{\alpha}-$weighted (labeled as ``Weighted") that does \emph{not} use truncation of weights and the truncated weighted at $M$ (labeled as ``Weighted-M") mechanisms, from left-to-right.  The distribution of local Lipschitz bounds for the Unweighted mechanism increases (or drifts) with larger sample sizes. The Weighted mechanism (that includes no weight truncation) shows a pronounced decrease in drift in the maximum Lipschitz of local databases over the increasing sample sizes as compared to the Unweighted mechanism, though even at sample size, $6400$, there is still a small, though decreasing drift of the maximum Lipschitz. By contrast, the Weighted-M mechanism, under setting $M = 3.5$, demonstrates rapid contraction of both the minimum and maximum local Lipschitz values onto $M$. This is still a probabilistic formal privacy result because the local Lipschitz values are not strictly bounded below $M$ due to sampling variability of $\theta$. The maximum of the distribution of local Lipschitz bounds at each sample size is slightly larger than $M$, indicating that our guarantee is probabilistic. While both the Weighted and Weighted-M local Lipschitz bounds contract at $\mathcal{O}(n^{-1/2})$, the multiplicative constant of the contraction rate is much smaller for Weighted-M because of the truncation to an asymptotic global Lipschitz of $M$ defined by the owner of the closely-held data.

Figure \ref{fig:utilities} presents the distributions for the averages of the mean parameter, $\mu$, over the $R=400$ Monte Carlo iterations.  We see there is some utility loss relative to Unweighted and Weighted under use of Weighted-M, though the resulting utility is still relatively robust.  The deterioration of the utility for Weighted-M as $n$ increases (as represented by the growing dissimilarity of the pseudo posterior distribution for $\mu$ to that under Unweighted) is a conservative result because we use the same $M = 3.5$ for all sample sizes.  Yet, the DP guarantee is based on the space of databases at a particular sample size, $n$, and $M$ will be set based on agency experience with a particular class of data (e.g., monthly survey responses) that all have very similar values for $n$.

\begin{figure}[htbp]
\centering
\includegraphics[width=0.9\textwidth, page = 1]{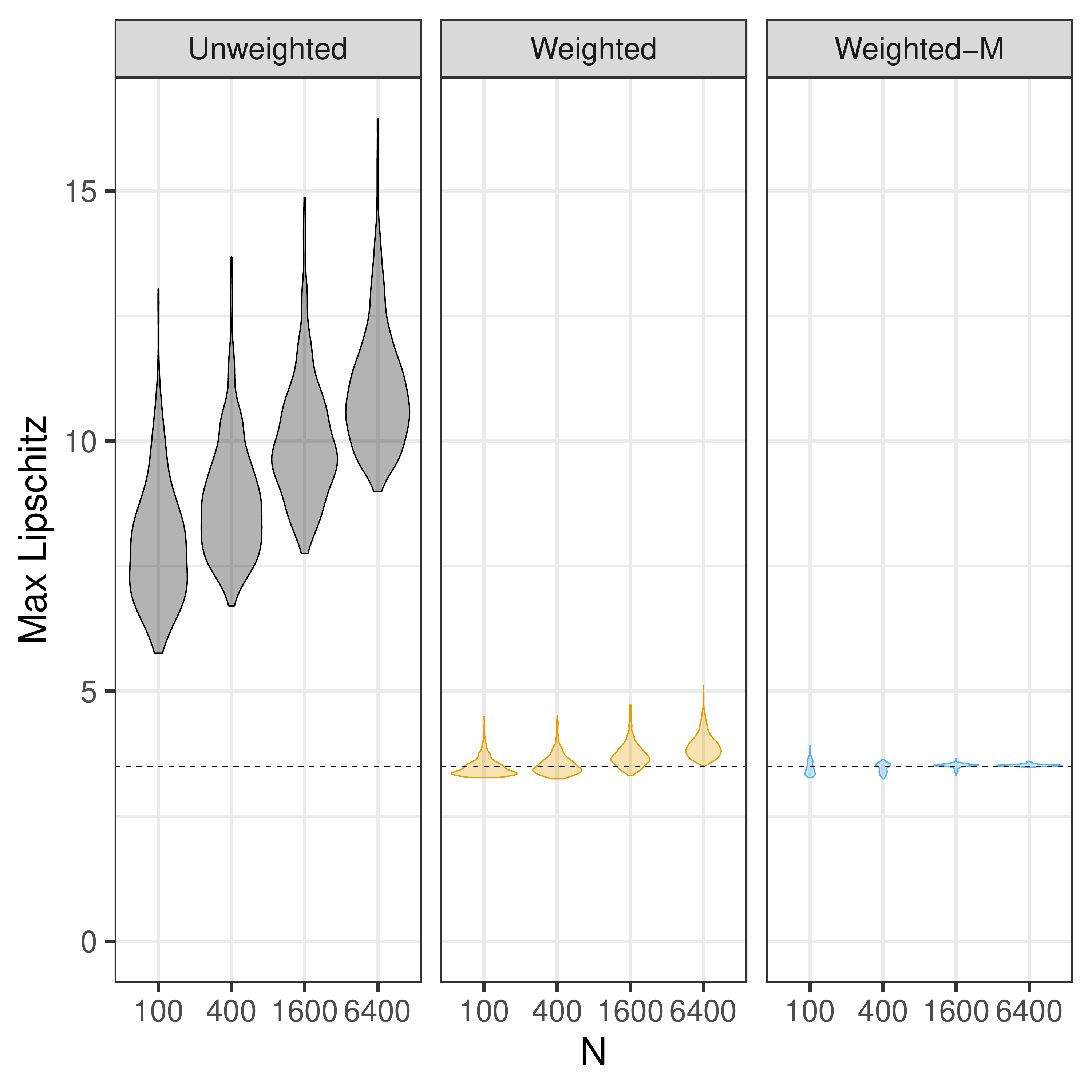}
\caption{Distribution of the maximum observed Lipschitz bound $\Delta_{\mathbf{y}}$ for each of sample sizes $(100, 400, 1600,6400)$  from $R=400$ realizations of pseudo posterior samples of (left to right) unweighted, weighted, and weighted-M ($M-$ truncated weights).}
\label{fig:lockLmain}
\end{figure}
\FloatBarrier

\begin{figure}[htbp]
\centering
\includegraphics[width=0.9\textwidth, page = 1]{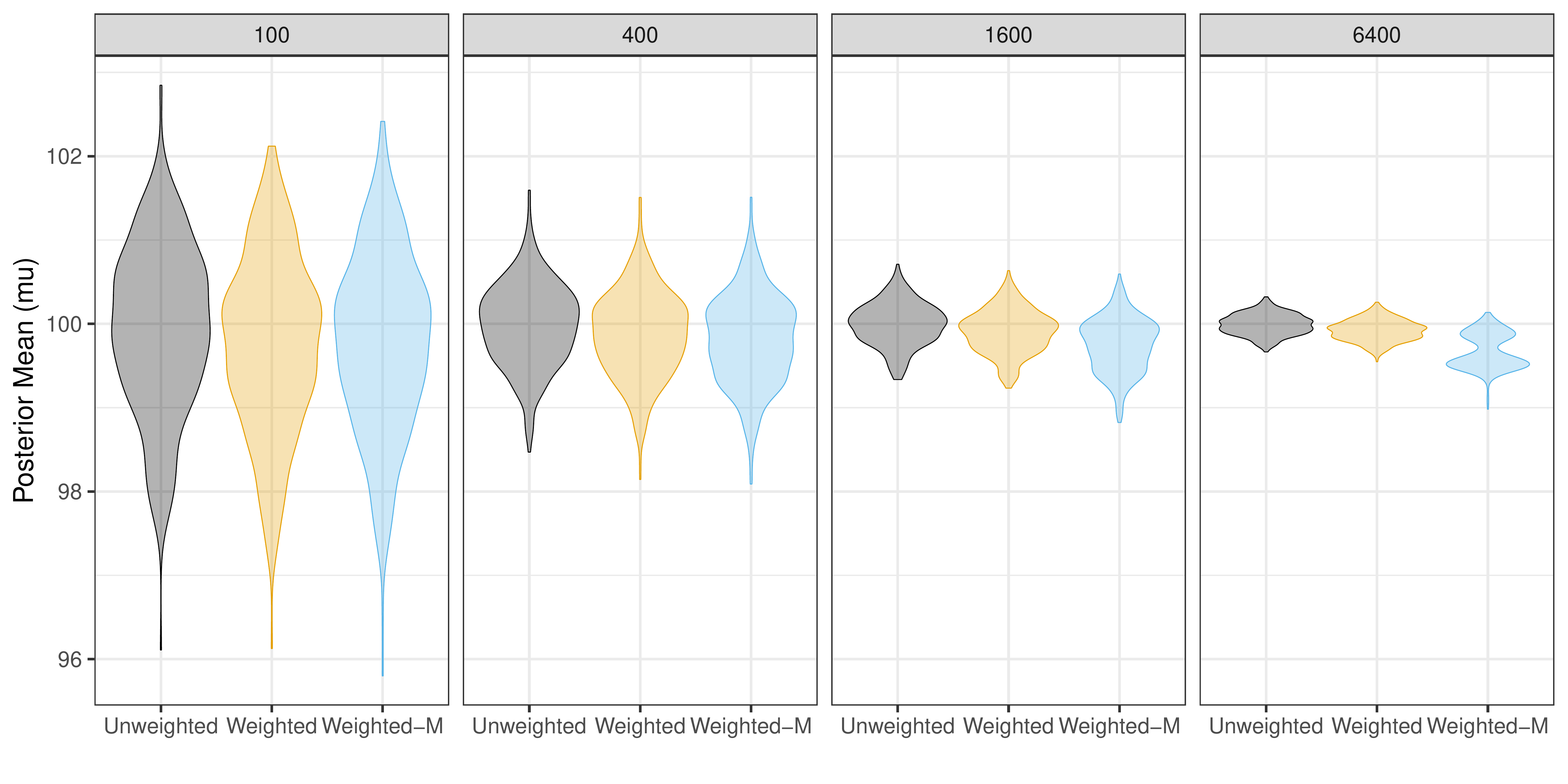}
\caption{Distributions of the average of mean parameter $\mu$ for each of sample size $(100, 400, 1600, 6400)$  from $R=400$ realizations of pseudo posterior samples of (left to right) unweighted, weighted, and weighted-M ($M-$ truncated weights). }
\label{fig:utilities}
\end{figure}
\FloatBarrier

\subsection{Recommendation for Setting a Global $\epsilon$ from a weighted$-M$ Lipschitz}
Although we have demonstrated an $\mathcal{O}(n^{-1/2})$ contraction rate of $\pi$ (the probability of exceeding $\epsilon-$ global DP), in theory, and have further illustrated this convergence in our above simulation study, it is difficult in practice to discover at what sample size under a specific synthesizer that one may declare the local Lipschitz to be global.  For typically used sample sizes $> 1000$ we suggest to take the Weighted-M Lipschitz and employ a multiplicative ``factor of safety", $s \in (1,1.05)$, to develop an upper, global bound that, in turn, determines $\epsilon$ because the contraction is extremely rapid.  If the sample size is $< 1000$, we recommend to set multiplicative $s' \in (1.05,1.10)$.

\section{Application to the CE Sample}
\label{sec:app}

We introduce the CE sample of consumer units (CU) or households in Section \ref{sec:app:CEdata}, where our goal is to synthesize a highly-skewed continuous variable, family income, under a local DP guarantee provided by our $\bm{\alpha}-$weighted pseudo posterior mechanism. In Section \ref{sec:app:utilityrisk}, we present risk and utility profiles of synthetic data drawn from our $\bm{\alpha}-$weighted pseudo posterior mechanism, along with comparisons to the EM, the non-differentially private risk-weighted synthesizer of \citet{HuSavitskyWilliams2021rebds} and the unweighted posterior mechanism.  Section \ref{sec:app:cg} presents privacy and utility results with different scaling and shifting, $(c, g)$, configurations for vector weights in Equation (\ref{eq:vectroweights}) to sketch out a risk-utility curve for our $\bm{\alpha}-$weighted pseudo posterior mechanism that we compare to that of the EM. A risk-utility curve provides the Bureau of Labor Statistics (BLS) options for selecting a risk-utility setting that matches their policy objectives.

\subsection{The CE Sample and Unweighted Synthesizer}
\label{sec:app:CEdata}

Our application of the $\bm{\alpha}-$weighted pseudo posterior mechanism focuses on providing privacy protection for a family income variable published by the CE.  The CE is administered by the BLS with the purpose of providing income and expenditure patterns indexed by geographic domains to support policy-making by State and Federal governments.  The description of the CE sample included here closely follows that in \citet{HuSavitskyWilliams2021rebds}. The CE contain data on expenditures, income, and tax statistics about CUs across the U.S. The CE public-use microdata (PUMD)\footnote{For for information about CE PUMD, visit {\url{https://www.bls.gov/cex/pumd.htm}}.} is publicly available record-level data, published by the CE. The CE PUMD has undergone masking procedures to provide privacy protection of survey respondents. Notably, the family income variable has undergone top-coding, a popular Statistical Disclosure Limitation (SDL) procedure that may result in reduced utility and insufficient privacy protection \citep{AnLittle2007JRSSA, HuSavitskyWilliams2021rebds}.

The CE sample in our application contains $n = 6208$ CUs, coming from the 2017 1st quarter CE Interview Survey. It includes the family income variable, which is highly right-skewed and deemed sensitive; see Figure \ref{fig:FamilyIncome} for its density plot. The CE sample also contains 10 categorical variables, listed in Table \ref{tab:CEvars}. These categorical variables are deemed insensitive and used as predictors in building a flexible synthesizer for the synthesis of the sensitive family income variable.

\begin{figure}[htbp]
  \centering
    \includegraphics[width=0.5\textwidth]{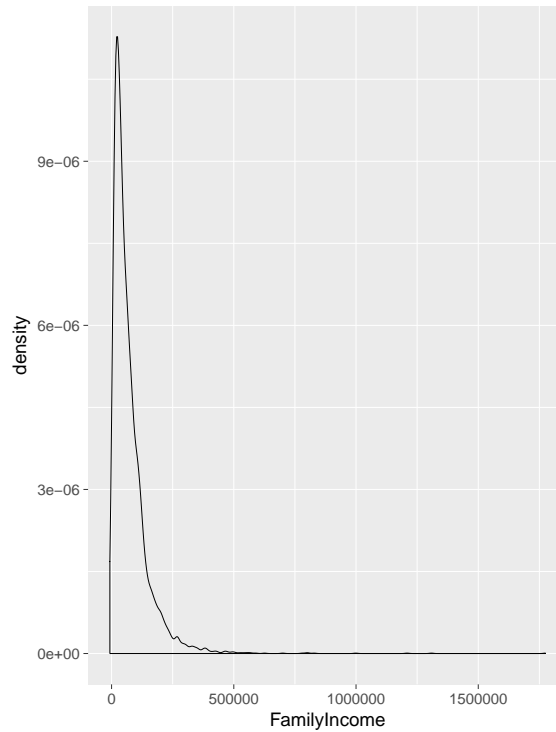}
    \caption{Density plot of Family Income in the CE sample. \label{fig:FamilyIncome}}
\end{figure}

\begin{table}[H]
\centering
\caption{Variables used in the CE sample. Data taken
  from the 2017 Q1 Consumer Expenditure Surveys. \label{tab:CEvars}}
\begin{tabular}{ll}
\hline
Variable &  Description\\ \hline
Gender & Gender of the reference person; 2 categories \\
Age & Age of the reference person; 5 categories  \\
Education Level & Education level of the reference person; 8 categories  \\
Region & Region of the CU; 4 categories \\
Urban & Urban status of the CU; 2 categories  \\
Marital Status & Marital status of the reference person; 5 categories  \\
Urban Type & Urban area type of the CU; 3 categories  \\
CBSA & 2010 core-based statistical area (CBSA) status; 3 categories \\
Family Size & Size of the CU; 11 categories \\
Earner & Earner status of the reference person; 2 categories  \\
Family Income & Imputed and reported income before tax of the CU; \\
&  approximate range: (-7K, 1,800K) \\ \hline
\end{tabular}
\end{table}

To generate partially synthetic databases for the CE sample with synthetic family income, we use an unweighted, non-private synthesizer: a flexible, parametric finite mixture synthesizer.
This finite mixture synthesizer has been shown to produce synthetic data characterized by a high utility, but also with an unacceptable level of disclosure risk \citep{HuSavitskyWilliams2021rebds}. We leave the details of the synthesizer in the Appendix \ref{appendix:synthesizer} for brevity and direct interested readers to \citet{HuSavitskyWilliams2021rebds} for further information.

\subsection{Risk and Utility Comparisons}
\label{sec:app:utilityrisk}

To generate synthetic data and compare results, we apply four synthesizers: 1) the unweighted, non-(locally) private synthesizer, labeled ``Unweighted"; 2) the locally private synthesizer under the $\bm{\alpha}-$weighted pseudo posterior mechanism, labeled ``DPweighted", with configuration $(c, g) = (0.7, 0.0)$; 3) the locally private synthesizer under the EM, labeled ``EMweighted", which is designed to privacy target, $\epsilon$, achieved by ``DPweighted"; 4) and the weighted, though non-(locally) private pseudo posterior synthesizer proposed by \citet{HuSavitskyWilliams2021rebds}, labeled ``Countweighted", that utilizes their method for measuring the by-record disclosure risk (based on an assumption about the behavior of an intruder). We use $\bm{\alpha}_c$ to denote the risk-adjusted record-indexed weights calculated in the Countweighted method. The labels are used throughout the remainder of this paper when presenting various risk and utility results.

We first look at the risk profiles of the four synthesizers. Figure \ref{fig:L0p7} plots the distributions of the Lipschitz bounds, $\Delta_{x_{i}}$'s, for each of the four synthesizers computed by taking the maximum of the $S$ log-likelihood ratios for each record, $i = 1,\ldots,(n = 6208)$ over the $S$ draws of $\theta$ from it's posterior distribution.  The maximum value of the $(\Delta_{x_i})$ over all of the records is denoted as $\Delta_\mathbf{x}$, the Lipschitz bound for the mechanism.

\begin{figure}[htbp]
\centering
\includegraphics[width=0.8\textwidth]{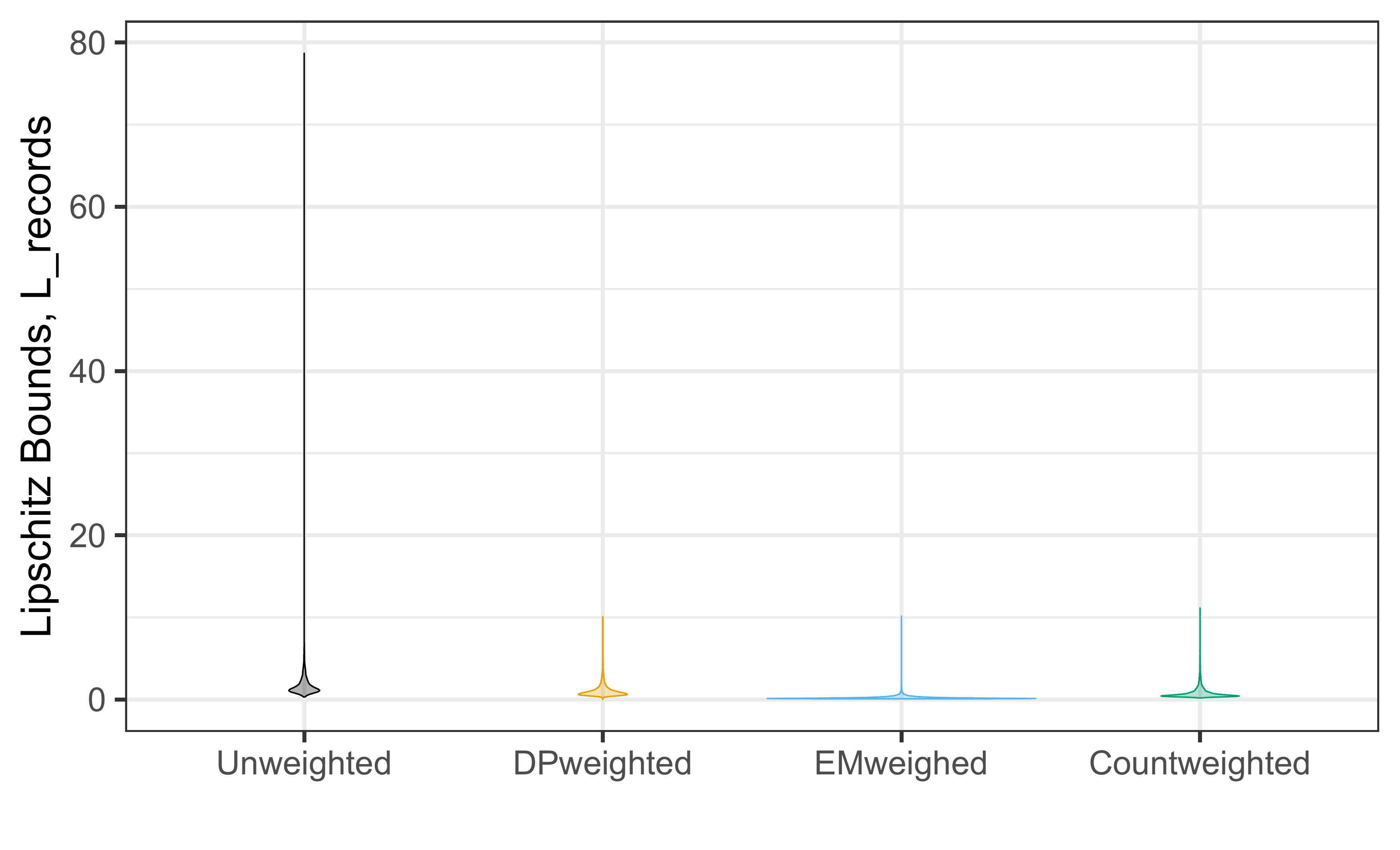}
\caption{Violin plots of the distribution of the Lipschitz bounds, $\Delta_\mathbf{x}$'s, for synthetic data generated the four synthesizers. The corresponding maximum $\Delta_\mathbf{x}$ values are: $\Delta_{Unweighted} = 78.7, \Delta_{\bm{\alpha},DPweighted} = 10.1, \Delta_{EMweighted} = 10.2, \Delta_{\bm{\alpha}_{c},Countweighted} = 11.17$.}
\label{fig:L0p7}
\end{figure}

The Unweighted, non-private synthesizer clearly has the highest maximum $\Delta_\mathbf{x}$ with $\Delta_{Unweighted} = 78.7$. The other non-private Countweighted synthesizer achieves a much lower maximum $\Delta_\mathbf{x}$ with $\Delta_{\bm{\alpha}_{c},Countweighted} = 11.17$. The large reduction in the Countweighted synthesizer owes to the positive correlation between by-record weights, $\bm{\alpha}_{c}$, where each $\alpha_{ci}$ is computed as the probability that the value for each target record is relatively isolated from that of other records used in the Countweighted synthesizer, on the one hand, with the by-record log-pseudo likelihood ratio bounds used for the DPweighted mechanism, on the other hand. DPweighted denotes our $\bm{\alpha}-$pseudo posterior mechanism.  The two locally private synthesizers both achieve even lower maximum $\Delta_\mathbf{x}$: $\Delta_{\bm{\alpha},DPweighted} = 10.1, \Delta_{\bm{\alpha},EMweighted} = 10.2$, indicating the best risk profiles. The EMweighted mechanism was estimated by setting the scalar with a target $\epsilon_{\mathbf{x}} = 2\Delta_{\bm{\alpha},\mathbf{x}}$, the local privacy guarantee (expenditure) achieved by our DPweighted mechanism with Lipschitz $\Delta_{\bm{\alpha},\mathbf{x}}$.  Our intent is to compare the utility performances between the two private mechanisms (DPweighted and EMweighted) where each achieves an equivalent privacy guarantee.  It bears mention that while the DPweighted under the pseudo posterior mechanism and the EMweighted under the EM achieve similar maximum local Lipschitz bounds, which governs the local DP guarantee, the EM tends to produce notably lower risk for most records than the DPweighted mechanism.  The lower record-indexed risk for EMweighted as compared to DPweighted is evident in the flattened shape of the violin plot for EMweighted. The EM sets the scalar weight based on the risk of the worst case over all records because the same level of downweighting must be applied to all records in contrast with the by-record weighting under of our $\bm{\alpha}-$weighted pseudo posterior mechanism in DPweighted.


Figure \ref{fig:mean0p7} and Figure \ref{fig:90q0p7} show a collection of violin plots of the distribution (obtained from re-sampling) for each of the mean and the 90th quantile statistics, respectively, estimated on the synthetic data generated under each of our four synthesizers and also on the closely-held confidential (real) data for comparison, labeled ``Data".  These figures allow us to compare the utility performances across our synthesizers by the examination of how well the real data distribution for each statistic is reproduced by the synthetic database for each of our synthesizers.  For the synthesizers, a set of $m = 20$ synthetic databases were generated and the distribution for each statistic was estimated on each databases (under re-sampling).  The resulting barycenter of the individual distributions in the Wasserstein space of measures was computed by averaging the quantiles over the $M$ databases \citep{Srietal15}.  Our privacy guarantees apply to \emph{each} synthetic draw from our mechanism, so the total privacy expenditure is that for each database shown in Figure~\ref{fig:L0p7} multiplied by $m$.   We compute utilities over $m = 20$ synthetic databases to fully capture the uncertainty in the synthetic data generation process from the (pseudo) posterior predictive distributions. Generating multiple synthetic databases are also standard practice in the research and practice of synthetic data using Bayesian synthesizers \citep{ReiterRaghu2007}. We note that the distribution of each statistic for a single synthetic database is very similar.

\begin{figure}[htbp]
\centering
\includegraphics[width=0.8\textwidth]{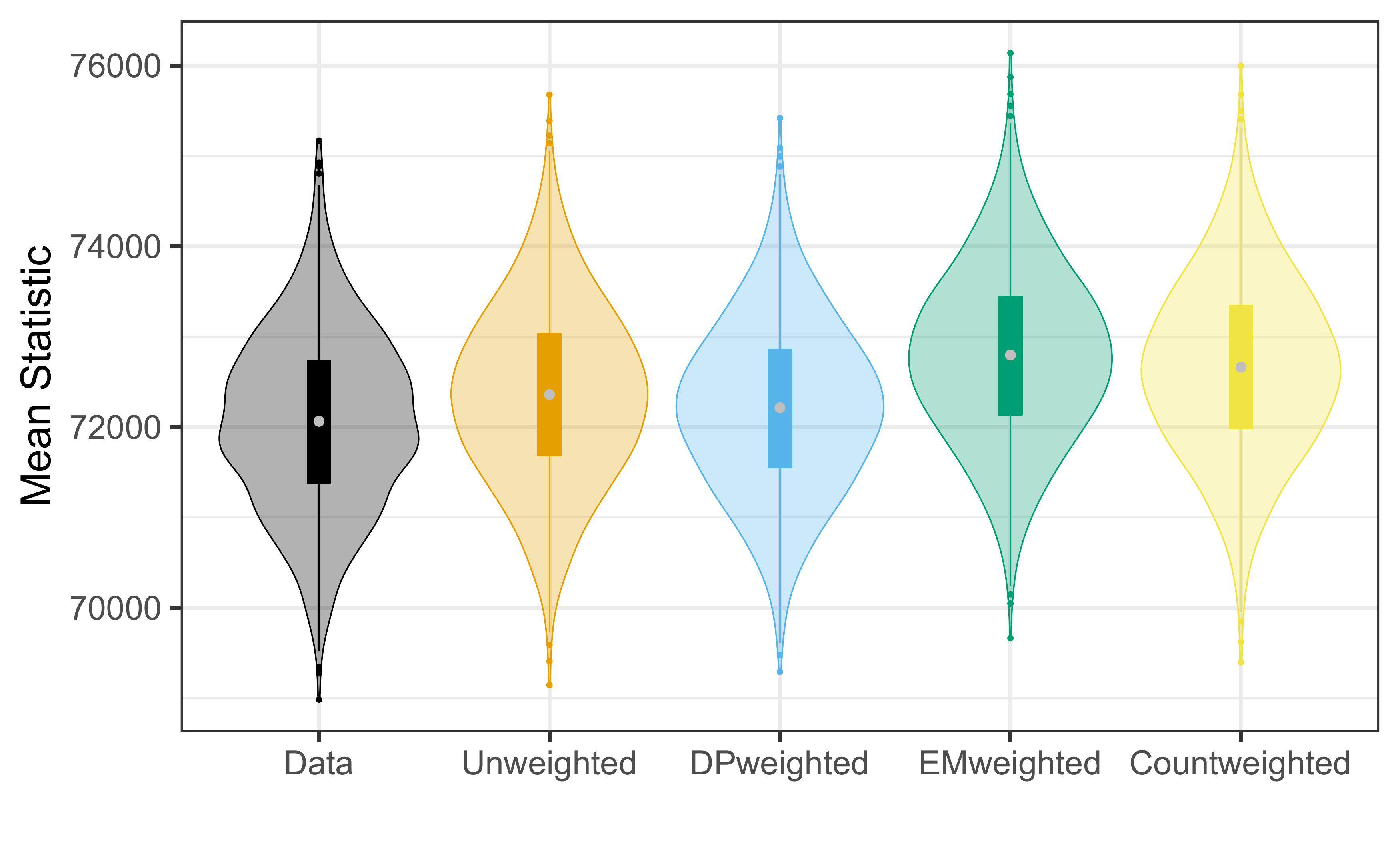}
\caption{Violin plots of the mean estimation of the confidential CE sample and the four synthesizers.}
\label{fig:mean0p7}
\end{figure}

\begin{figure}[htbp]
\centering
\includegraphics[width=0.8\textwidth]{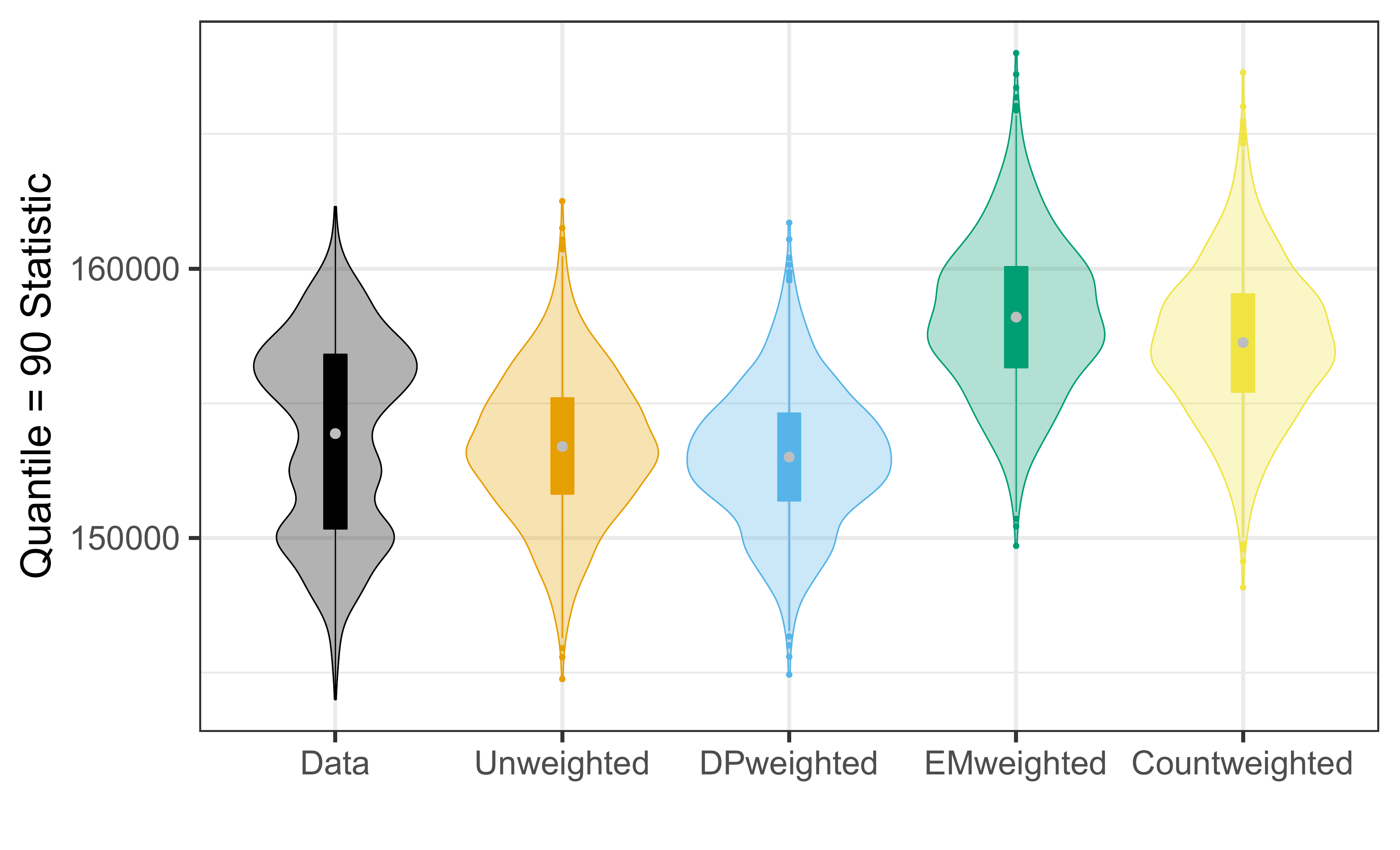}
\caption{Violin plots of the 90th quantile estimation of the confidential CE sample and the four synthesizers.}
\label{fig:90q0p7}
\end{figure}

The DPweighted synthesizer outperforms the EMweighted and Countweighted mechanisms in utility preservation. First, especially evident in Figure \ref{fig:90q0p7}, DPweighted (the $\bm{\alpha}-$weighted pseudo posterior mechanism) provides better estimates than EMweighted (the scalar-weighted EM). The notably deteriorated utility preservation of the EM derives from the setting that scalar weight applied to all records based on the highest risk records as earlier discussed. Since both mechanisms achieve the same maximum Lipschitz bound $\Delta_\mathbf{x}$, which governs the local DP guarantee, these results indicate that the EM has to compromise a large amount of the utility to achieve a similar local DP guarantee compared to the $\bm{\alpha}-$weighted pseudo posterior mechanism.

Second, while the non-private Unweighted synthesizer and the locally private DPweighted synthesizer provide equally good estimates for both the mean and the 90th quantile, the much greater Lipschitz bound of the Unweighted synthesizer shown in Figure \ref{fig:L0p7} indicates a much worse balance for the utility-risk trade-off as compared to DPweighted. The third minor point is that the Countweighted synthesizer, albeit non-locally private, achieves only a slightly higher maximum Lipschitz bound compared to our private DPweighted synthesizer. However, its utility preservation is worse, especially evident in Figure \ref{fig:90q0p7} for the 90th quantile estimation.

In summary, our private DPweighted mechanism outperforms the other three synthesizers to achieve a highly satisfactory risk-utility trade-off balance. We next explore different scaling and shift configurations of $(c, g)$, introduced in Section \ref{sec:setweights}, to sketch out the risk-utility curves for DPweighted and EMweighted.

\subsection{Mapping DP Risk and Utility Curves}
\label{sec:app:cg}

We conclude by applying the scaling parameter, $c$, and the shift parameter, $g$, to the distribution of weights, $\bm{\alpha}$, used in our $\bm{\alpha}-$weighted pseudo posterior mechanism in order to enumerate the risk-utility curve to support the choice of $\bm{\alpha}$ (and, hence, $\Delta_{\bm{\alpha},\mathbf{x}}$, and $\epsilon_{\mathbf{x}}$). Having such a risk-utility curve would allow the BLS (or, more generally, the owner of the closely-held private database) to discover the setting configuration that best represents their policy goal for the level of privacy protection sought.  We compare the risk-utility mapping produced by the $\bm{\alpha}-$weighted pseudo posterior mechanism to that of the EM, which we recall reduces to a scalar-weighted pseudo posterior under use of the log-likelihood as the utility measure.  As discussed in \citet{HuSavitskyWilliams2021rebds}, applying a scaling constant, $c < 1$, will induce a compression in the distribution of the weights while apply a scaling $g < 0 $ will induce a downward shift in the distribution of the record-indexed weights.   We apply the scaling and shifting in a manner that uses truncation to ensure each of the resulting weights are restricted to lie in $[0, 1]$.

Each violin plot in Figure~\ref{fig:q90compare} presents a distribution of the 90th quantile for a synthetic database generated under a particular $(c, g)$ configuration.  The sequence of plots from left-to-right are ordered from less scaling and shifting (with a relatively higher or looser level for the privacy guarantee) to more scaling and shifting (with a relatively lower  or tighter level for the privacy guarantee).   The specific local sensitivity or Lipschitz value, $\Delta_{\bm{\alpha},\mathbf{x}}$, associated with each configuration are shown in Table~\ref{tab:cg}, where we recall that the associated local privacy guarantee is $\epsilon_{\mathbf{x}} = 2 \times \Delta_{\bm{\alpha},\mathbf{x}} \times (m=20)$, where the multiplication by $m=20$ derives from our use of multiple posterior draws to generate multiple synthetic databases.

The accompanying Table~\ref{tab:cg} demonstrates a nearly $80\%$ reduction in the level for the local DP guarantee of the $\bm{\alpha}-$weighted pseudo posterior mechanism over the range of configurations.

Figure~\ref{fig:q90compare} plots the distribution of the $90-$th quantile for the generated synthetic data under each of the Unweighted (``UW"), Exponential (``EM") and $\bm{\alpha}-$weighted pseudo posterior (``DP") mechanisms at a sequence of (scaling, shifting), $(c,g)$, combinations.  The local sensitivity/Lipschitz, $\Delta_{\bm{\alpha},\mathbf{x}}$, is lower as one traverses left-to-right, indicating a stronger local privacy guarantee on the right-hand side.  This sequence of plots demonstrates a much flatter or reduced deterioration of the $90-$th quantile distribution for the DPweighted mechanism, the $\bm{\alpha}-$weighted pseudo posterior mechanism, as compared to the EMweighted mechanism.  The superior result for DPweighted is not surprising due to the greater flexibility of DPweighted to concentrate downweighting to high-risk records versus the application of a scalar weight based on the highest risk record to all records under EMweighted.

\begin{figure}[htbp]
\centering
\includegraphics[width=0.9\textwidth]{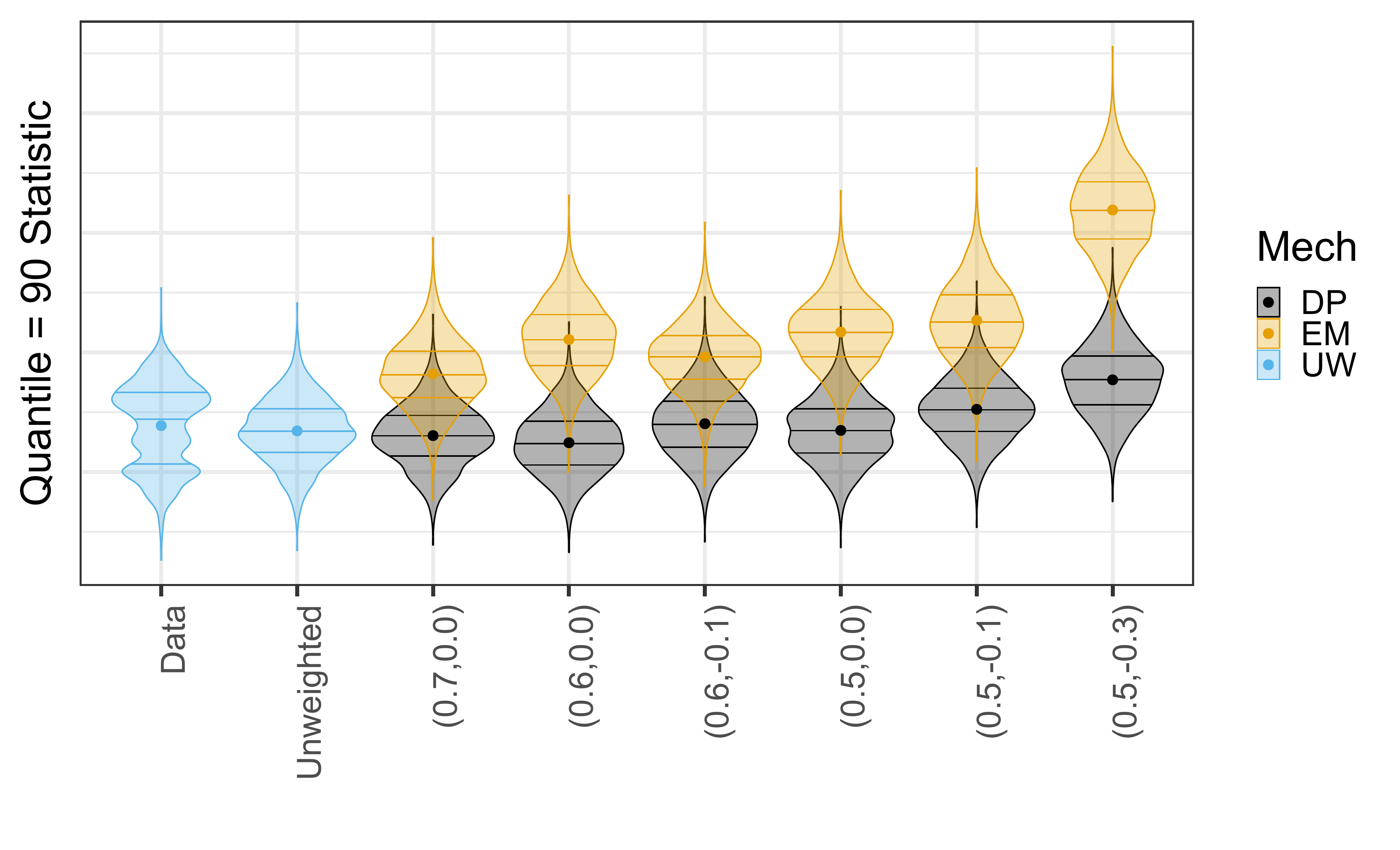}
\caption{Violin plots of the 90th quantile estimation of: 1) the confidential CE sample; 2) the unweighted, non-private synthesizer; and overlapping violin plots of the 90th quantile estimation of the synthesizer under the $\bm{\alpha}-$weighted pseudo posterior mechanism compared to the synthesizer under the EM with equivalent $\Delta_{\bm{\alpha}, \mathbf{x}}$ values, for the following $(c, g)$ configurations: 3) $(c, g) = (0.7, 0.0)$; 4) $(c, g) = (0.6, 0.0)$; 5) $(c, g) = (0.6, - 0.1)$; 6) $(c, g) = (0.5, 0.0)$; 7) $(c, g) = (0.5, - 0.1)$; 8) $(c, g) = (0.5, - 0.3)$.}
\label{fig:q90compare}
\end{figure}


\begin{table}[htbp]
\begin{center}
\caption{Table of values of the Lipchitz bound $\Delta_{\bm{\alpha},\mathbf{x}}$, of the synthesizer under the $\bm{\alpha}-$weighted pseudo posterior mechanism, for a series of $(c, g)$ configurations. $\Delta_{Unweighted} = 78.7.$}
\label{tab:cg}
\begin{tabular}{ r  c }
\hline
$(c, g)$  & $\Delta_{\bm{\alpha},\mathbf{x}}$ value \\ \hline
(0.7, 0.0) & 10.10 \\
(0.6, 0.0) & 8.16 \\
(0.6, -0.1) & 7.30 \\
(0.5, 0.0) & 6.09 \\
(0.5, -0.1) & 5.71 \\
(0.5, -0.3) & 2.25 \\ \hline
\end{tabular}
\end{center}
\end{table}

%
%
%

\section{Conclusion}
\label{sec:conclusion}
This paper adapts the $\bm{\alpha}-$weighted pseudo posterior synthesizer and converts any non-private posterior synthesizer to a formally private mechanism. Our pseudo posterior mechanism provides a much higher utility than the EM for equivalent local privacy guarantee, $\epsilon_{\mathbf{x}}$, due to its surgical downweighting of high-risk records (as opposed to the scalar downweighting imposed by the EM).  The construction for the $\bm{\alpha}-$weighted pseudo posterior mechanism utilizes the log-pseudo likelihood to develop the local Lipschitz bound. We provide an asymptotic result on the contraction of a local Lipschitz to a global bound that guarantees an $(\epsilon,\pi)-$probabilistic DP guarantee where $\pi$ contracts onto $0$ for $n$ sufficiently large.  We are able to increase the rate of contraction by truncating the weight, $\alpha^{\ast}_{i} = 0 $ if the weighted log likelihood contribution, $\alpha_{i} \times f_{\theta}(x_{i}) > M$, where $M$ becomes the targeted global point of contraction.

Our $\bm{\alpha}-$weighted pseudo posterior mechanism has the feature that it accommodates any synthesizer model formulated by the statistical agency and offers a simple weighting scheme that guarantees a pDP result.  The simple weighting allows the posterior sampling scheme devised for the non-private synthesizer to be utilized for synthesis with minor modification for the $\bm{\alpha}-$weighted pseudo posterior mechanism.

%

\bibliography{DPbib}

\newpage
\appendix 

\section{Proofs for Theoretical Results in Sections~\ref{sec:theory} and \ref{sec:lockin}}\label{app:theory}

\subsection{Proof for Theorem~\ref{th:dpresult}}

We begin by stating an enabling result that connects the global Lipschitz bound, $\Delta_{\bm{\alpha}}$, to the KL divergence between the posterior densities (given $\mathbf{x}$ versus $\mathbf{y}$) from the inclusion of a database record.
\begin{theorem}
\label{thm:KL}
$\forall \mathbf{x} \in \mathcal{X}^{n}, \mathbf{y} \in \mathcal{X}^{n-1}: \delta(\mathbf{x}, \mathbf{y}) = 1$ and $\bm{\alpha}(\cdot)$ with $\Delta_{\bm{\alpha}} > 0$ satisfying Assumption \ref{ass:lipschitz},
\begin{equation}
\mathop{\sup}_{\mathbf{x} \in \mathcal{X}^{n}, \mathbf{y}\in \mathcal{X}^{n-1}: \delta(\mathbf{x}, \mathbf{y}) = 1} D_{KL} \left[\infdiv{\xi^{\bm{\alpha}(\mathbf{x})} (\cdot \mid \mathbf{x})}{\xi^{\bm{\alpha}(\mathbf{y})}(\cdot \mid \mathbf{y})}\right] \leq 2\Delta_{\bm{\alpha}},
\end{equation}
where $D_{KL}(\infdiv{P}{Q}) = \int_{\mathcal{X}^{n}} \ln \frac{dP}{d Q}dP$.
\end{theorem}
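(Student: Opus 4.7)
The plan is to write the log-ratio of the two pseudo posteriors as a sum of two terms—a pointwise log-pseudo likelihood difference and a log ratio of normalizing constants—and bound each by $\Delta_{\bm{\alpha}}$ using Assumption~\ref{ass:lipschitz}, giving a uniform $2\Delta_{\bm{\alpha}}$ bound on the integrand before integrating.

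First I would expand the pseudo posterior ratio using Equation~(\ref{pseudopost}):
\begin{equation}
\log\frac{\xi^{\bm{\alpha}(\mathbf{x})}(\theta\mid\mathbf{x})}{\xi^{\bm{\alpha}(\mathbf{y})}(\theta\mid\mathbf{y})} = \bigl[f^{\bm{\alpha}(\mathbf{x})}_{\theta}(\mathbf{x}) - f^{\bm{\alpha}(\mathbf{y})}_{\theta}(\mathbf{y})\bigr] + \bigl[\log \phi^{\bm{\alpha}(\mathbf{y})}(\mathbf{y}) - \log \phi^{\bm{\alpha}(\mathbf{x})}(\mathbf{x})\bigr], \nonumber
\end{equation}
noting that the prior $\xi(\theta)$ cancels. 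The first bracket is bounded in absolute value by $\Delta_{\bm{\alpha}}$ directly from Assumption~\ref{ass:lipschitz}, since the pair $(\mathbf{x},\mathbf{y})$ lies in the admissible LOO class.

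The second bracket requires pushing the same pointwise Lipschitz bound through the normalizing integrals. Since $f^{\bm{\alpha}(\mathbf{y})}_{\theta}(\mathbf{y}) \leq f^{\bm{\alpha}(\mathbf{x})}_{\theta}(\mathbf{x}) + \Delta_{\bm{\alpha}}$ for every $\theta\in\Theta$, I would exponentiate and integrate against the prior $\xi$ to obtain $\phi^{\bm{\alpha}(\mathbf{y})}(\mathbf{y}) \leq e^{\Delta_{\bm{\alpha}}}\phi^{\bm{\alpha}(\mathbf{x})}(\mathbf{x})$, and by symmetry $\phi^{\bm{\alpha}(\mathbf{y})}(\mathbf{y}) \geq e^{-\Delta_{\bm{\alpha}}}\phi^{\bm{\alpha}(\mathbf{x})}(\mathbf{x})$, so that the log ratio of normalizing constants is likewise bounded in absolute value by $\Delta_{\bm{\alpha}}$. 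Adding the two bounds yields
\begin{equation}
\left|\log\frac{\xi^{\bm{\alpha}(\mathbf{x})}(\theta\mid\mathbf{x})}{\xi^{\bm{\alpha}(\mathbf{y})}(\theta\mid\mathbf{y})}\right| \leq 2\Delta_{\bm{\alpha}} \qquad \text{for all } \theta\in\Theta. \nonumber
\end{equation}

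The KL bound then follows immediately by integrating this uniform pointwise bound against $\xi^{\bm{\alpha}(\mathbf{x})}(\cdot\mid\mathbf{x})$ and taking the supremum over the admissible LOO pairs. The only subtlety—and really the only non-routine step—is confirming that Assumption~\ref{ass:lipschitz} applies symmetrically to swap the roles of $\mathbf{x}$ and $\mathbf{y}$ under the LOO distance; given the ``$\forall \mathbf{x}\in\mathcal{X}^n,\mathbf{y}\in\mathcal{X}^{n-1}$'' quantifier in the assumption, this is immediate, but it is the one place where the definition of the distance matters. No machinery beyond Assumption~\ref{ass:lipschitz} and monotonicity of exp/log against a probability measure is needed.
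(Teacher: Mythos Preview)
Your proposal is correct and follows essentially the same route as the paper: split the log-ratio into the log-pseudo-likelihood difference and the log ratio of normalizing constants, bound each by $\Delta_{\bm{\alpha}}$ via Assumption~\ref{ass:lipschitz}, and integrate. Your version is in fact slightly cleaner in that you explicitly invoke the two-sided bound on the normalizing constants (the paper writes only one direction before passing to the absolute value), and you observe that the argument yields a uniform pointwise bound on the log-density ratio, of which the KL bound is an immediate consequence.
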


\begin{proof}
\begin{eqnarray}
D_{KL} \left[\infdiv{\xi^{\bm{\alpha}(\mathbf{x})} (\cdot \mid \mathbf{x})}{\xi^{\bm{\alpha}(\mathbf{y})}(\cdot \mid \mathbf{y})}\right]
&=& \int_{\Theta} \ln \frac{d \xi^{\bm{\alpha}(\mathbf{x})}(\theta \mid \mathbf{x})}{d \xi^{\bm{\alpha}(\mathbf{y})}(\theta \mid \mathbf{y})} d\xi^{\bm{\alpha}(\mathbf{x})}(\theta \mid \mathbf{x}) \nonumber \\
&=& \int_{\Theta} \ln \frac{p_{\theta}^{\bm{\alpha}(\mathbf{x})}(\mathbf{x})}{p_{\theta}^{\bm{\alpha}(\mathbf{y})}(\mathbf{y})} d \xi^{\bm{\alpha}(\mathbf{x})}(\theta \mid \mathbf{x}) + \int_{\Theta} \ln \frac{\phi^{\bm{\alpha}(\mathbf{y})}(\mathbf{y})}{\phi^{\bm{\alpha}(\mathbf{x})}(\mathbf{x})} d\xi^{\bm{\alpha}(\mathbf{x})} (\theta \mid \mathbf{x}) \nonumber \\
&\leq& \int_{\Theta} \abs{\ln \frac{p_{\theta}^{\bm{\alpha}(\mathbf{x})}(\mathbf{x})}{p_{\theta}^{\bm{\alpha}(\mathbf{y})}(\mathbf{y})}} d \xi^{\bm{\alpha}(\mathbf{x})} (\theta \mid \mathbf{x}) + \int_{\Theta} \ln \frac{\phi^{\bm{\alpha}(\mathbf{y})}(\mathbf{y})}{\phi^{\bm{\alpha}(\mathbf{x})}(\mathbf{x})} d\xi^{\bm{\alpha}(\mathbf{x})} (\theta \mid \mathbf{x}) \nonumber \\
&\leq& \Delta_{\bm{\alpha}} + \abs{\ln \frac{\phi^{\bm{\alpha}(\mathbf{y})}(\mathbf{y})}{\phi^{\bm{\alpha}(\mathbf{x})}(\mathbf{x})}}
\end{eqnarray}

From Assumption \ref{ass:lipschitz}, $p_{\theta}^{\bm{\alpha}(\mathbf{x}) } (\mathbf{x}) \leq \exp (\Delta_{\bm{\alpha}})p_{\theta}^{\bm{\alpha}(\mathbf{y}) }(\mathbf{y}), \forall \theta \in \Theta$, so
\begin{equation}
\phi^{\bm{\alpha}(\mathbf{y})}(\mathbf{y}) = \int_{\Theta} p_{\theta}^{\bm{\alpha}(\mathbf{y})}(\mathbf{y}) d \xi(\theta) \leq \exp(\Delta_{\bm{\alpha}}) \int_{\Theta} p_{\theta}^{\bm{\alpha}(\mathbf{x})}(\mathbf{x}) d \xi (\theta) = \exp (\Delta_{\bm{\alpha}}) \phi^{\bm{\alpha}}(\mathbf{x}),
\end{equation}
which gives
\begin{equation}
\mathop{\sup}_{\mathbf{x} \in \mathcal{X}^{n}, \mathbf{y} \in \mathcal{X}^{n-1}: \delta(\mathbf{x}, \mathbf{y}) = 1} D_{KL} \left[\infdiv{\xi^{\bm{\alpha}(\mathbf{x})} (\cdot \mid \mathbf{x})}{\xi^{\bm{\alpha}(\mathbf{y})}(\cdot \mid \mathbf{y})}\right] \leq 2\Delta_{\bm{\alpha}}.
\end{equation}
\end{proof}

\subsubsection{Proof of Theorem~\ref{th:dpresult}}

From Assumption \ref{ass:lipschitz}, $\frac{p_{\theta}^{\bm{\alpha}(\mathbf{x})}(\mathbf{x})}{p_{\theta}^{\bm{\alpha}(\mathbf{y})}(\mathbf{y})} \leq \exp(\Delta_{\bm{\alpha}})$. From Theorem \ref{thm:KL}, we show $\phi^{\bm{\alpha}(\mathbf{y})}(\mathbf{y}) \leq \exp(\Delta_{\bm{\alpha}}) \phi^{\bm{\alpha}(\mathbf{x})}(\mathbf{x})$. Then, $\forall \mathbf{x} \in \mathcal{X}^{n}$ and for each $\mathbf{x}$, $\forall \mathbf{y}\in\mathcal{X}^{n-1}:\delta(\mathbf{x},\mathbf{y}=1)$,
\begin{eqnarray}
\xi^{\bm{\alpha}(\mathbf{x})}(B \mid \mathbf{x})
&=& \frac{\int_B \frac{p_{\theta}^{\bm{\alpha}(\mathbf{x})}(\mathbf{x})} {p_{\theta}^{\bm{\alpha}(\mathbf{y})}(\mathbf{y})} p_{\theta}^{\bm{\alpha}(\mathbf{y})}(\mathbf{y}) d\xi(\theta)}{\phi^{\bm{\alpha}(\mathbf{y})}(\mathbf{y})} \cdot \frac{\phi^{\bm{\alpha}(\mathbf{y})}(\mathbf{y})}{\phi^{\bm{\alpha}(\mathbf{x})}(\mathbf{x})} \nonumber\\
&\leq& \exp(2\Delta_{\bm{\alpha}}) \xi^{\bm{\alpha}(\mathbf{y})}(B \mid \mathbf{y}).
\end{eqnarray}

\subsection{Proof for Lemma~\ref{lm:postpred}}
\begin{eqnarray}
P^{\bm{\alpha}(\mathbf{x})}(\bm{\zeta} \in C \mid \mathbf{x})
&=& \int P(\bm{\zeta} \in C \mid \mathbf{x}, \theta) d\xi^{\bm{\alpha}(\mathbf{x})}(\theta \mid \mathbf{x}) \nonumber \\
&=& \int P(\bm{\zeta} \in C \mid \theta) d\xi^{\bm{\alpha}(\mathbf{x})}(\theta \mid \mathbf{x}) \nonumber \\
&=& \int P(\bm{\zeta} \in C \mid \theta) \frac{d\xi^{\bm{\alpha}(\mathbf{x})}(\theta \mid \mathbf{x})}{d\xi^{\bm{\alpha}(\mathbf{y})}(\theta \mid \mathbf{y})} d\xi^{\bm{\alpha}(\mathbf{y})}(\theta \mid \mathbf{y}) \nonumber \\
&\leq& e^{\epsilon} \int P(\bm{\zeta} \in C \mid \theta) d\xi^{\bm{\alpha}(\mathbf{y})}(\theta \mid \mathbf{y}) \nonumber \\
&=& e^{\epsilon} P^{\bm{\alpha}(\mathbf{y})} (\bm{\zeta} \in C \mid \mathbf{y}).
\end{eqnarray}

\subsection{Proof of Theorem~\ref{th:consistency}}
Let us the define the following subset of $\theta\in\Theta$,
\begin{equation*}
  U_{n} = \left\{\theta\in\Theta: \left[(1-\alpha_{m})D^{(n_{A})}_{\theta_{0},\bm{\alpha}}\left(\theta,\theta^{\ast}\right) + (1-\alpha^{(n)})D^{(n_{Q})}_{\theta_{0},1^{-}}\left(\theta,\theta^{\ast}\right)\right] \geq (D+3t) n\tau_{n}^{2}\right\},
\end{equation*}
which is the restricted set for which we will bound the pseudo posterior distribution, $\xi^{\bm{\alpha}}\left(U_{n}\mid \mathbf{x}\right)$, from above to achieve the result of Theorem~\ref{th:consistency}.   We begin with the statement and proof
of Lemma~\ref{denominator} that extends Lemma 8.1 of  \citet{Ghosal00convergencerates} to our $\alpha-$pseudo posterior in order to provide a concentration
inequality to probabilistically (in $P_{\theta_{0}}-$probability) bound the denominator of the $\bm{\alpha}-$pseudo posterior distribution, $\xi^{\bm{\alpha}}\left(U_{n}\mid \mathbf{x}\right)$, from below.

\subsubsection{Enabling Lemma}
\begin{lemma}\label{denominator}
(Concentration Inequality)
Suppose Assumption~\ref{prior} holds. Define $\alpha_{m} = \mathop{\max}_{i \in A_{n}}\alpha_{i}$ and $\alpha_{l} = \mathop{\min}_{i \in A_{n}}\alpha_{i}$.
For every $\tau_{n} > 0$ and measure $\Pi$ on the set $B_{n}\left(\theta^{\ast},\xi;\theta_{0}\right)$, we have for every $C^{\ast}_{1} = \sqrt{2+C_{1}^{2}+C_{3}^{2}}$, and $n$ sufficiently large,
\begin{equation}\label{denomresult}
P_{\theta_{0}}\left\{\mathop{\int}_{\theta\in B_{n}}\displaystyle e^{-r_{n,\bm{\alpha}}\left(\theta,\theta^{\ast}\right)}
\xi\left(d\theta\right)\leq e^{-\alpha_{m}(D+t)n\tau_{n}^{2}}\right\}
\leq \frac{(1+\alpha_{l}^{2})(C_{1}^{\ast})^{2}}{\alpha_{m}^{2}}\times\frac{1}{(D+t-1)^{2}n\tau_{n}^{2}},
\end{equation}
where the above probability is taken with the respect to $P_{\theta_{0}}$.
\end{lemma}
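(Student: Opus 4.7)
The plan is to extend Lemma~8.1 of \citet{Ghosal00convergencerates} to the record-indexed, vector-weighted pseudo-likelihood setting. The argument will proceed in three moves: a Jensen lower bound on the denominator integral, a first-moment (Fubini) bound via Assumption~\ref{prior}, and a Chebyshev tail bound that exploits the dyadic split from Assumption~\ref{size}.

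First, I would take $\Pi := \xi\vert_{B_{n}}/\xi(B_{n})$ as a probability measure on $B_{n} := B_{n}(\theta^{\ast},\eta;\theta_{0})$ and apply Jensen's inequality to the convex map $e^{-\cdot}$,
\[
\mathop{\int}_{\theta\in B_{n}} e^{-r_{n,\bm{\alpha}}(\theta,\theta^{\ast})}\,\xi(d\theta) \;=\; \xi(B_{n})\mathop{\int} e^{-r_{n,\bm{\alpha}}(\theta,\theta^{\ast})}\,d\Pi(\theta) \;\geq\; \xi(B_{n})\exp\!\bigl(-\mathop{\int} r_{n,\bm{\alpha}}(\theta,\theta^{\ast})\,d\Pi(\theta)\bigr).
\]
Invoking $\xi(B_{n})\geq e^{-n\tau_{n}^{2}}$ from Assumption~\ref{prior}, the event in~(\ref{denomresult}) is contained in
\[
E \;:=\; \bigl\{\, R \;>\; (\alpha_{m}(D+t)-1)\,n\tau_{n}^{2}\,\bigr\}, \qquad R \;:=\; \mathop{\int} r_{n,\bm{\alpha}}(\theta,\theta^{\ast})\,d\Pi(\theta),
\]
so it suffices to bound $P_{\theta_{0}}(E)$.

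Next I would decompose $r_{n,\bm{\alpha}}$ along the partition of Assumption~\ref{size},
\[
r_{n,\bm{\alpha}}(\theta,\theta^{\ast}) \;=\; \mathop{\sum}_{i\in A_{n}}\alpha_{i}\,L_{i}(\theta) \;+\; \alpha^{(n)}\mathop{\sum}_{i\in Q_{n}} L_{i}(\theta), \qquad L_{i}(\theta) \;:=\; \log\!\frac{p_{\theta_{i}^{\ast}}(x_{i})}{p_{\theta_{i}}(x_{i})}.
\]
Using Fubini together with the first-moment inequality $\sum_{i}\int p_{\theta_{0},i}\log(p_{\theta_{i}^{\ast}}/p_{\theta_{i}})\,d\mu_{i}\leq n\eta^{2}$ from Assumption~\ref{prior}, setting $\eta = \tau_{n}$, and the bounds $\alpha_{i}\leq\alpha_{m}$ on $A_{n}$ and $\alpha^{(n)}\leq\alpha_{m}$ for $n$ large, I obtain $\mathop{\mathbb{E}}_{\theta_{0}}[R]\leq\alpha_{m} n\tau_{n}^{2}$. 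For the variance, independence of the $x_{i}$ under $P_{\theta_{0}}$ together with the second-moment bound in Assumption~\ref{prior} yields
\[
\mathrm{Var}_{\theta_{0}}(R) \;\leq\; \mathop{\sum}_{i\in A_{n}}\alpha_{i}^{2}\,\mathop{\mathbb{E}}_{\theta_{0}}\!\bigl[(\mathop{\int} L_{i}(\theta)\,d\Pi)^{2}\bigr] \;+\; (\alpha^{(n)})^{2}\mathop{\sum}_{i\in Q_{n}}\mathop{\mathbb{E}}_{\theta_{0}}\!\bigl[(\mathop{\int} L_{i}(\theta)\,d\Pi)^{2}\bigr].
\]
The $A_{n}$-sum is controlled through $|A_{n}|\leq C_{1}n^{1/2}$ and the $Q_{n}$-sum through $(1-\alpha^{(n)})\leq C_{3}\tau_{n}n_{Q}^{-1/2}$; after normalization by the Chebyshev deviation scale $\alpha_{m}(D+t-1)n\tau_{n}^{2}$, bookkeeping these contributions produces the compound constant $(1+\alpha_{l}^{2})(C_{1}^{\ast})^{2}/\alpha_{m}^{2}$ with $C_{1}^{\ast}=\sqrt{2+C_{1}^{2}+C_{3}^{2}}$. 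A single application of Chebyshev's inequality then furnishes the stated bound $(1+\alpha_{l}^{2})(C_{1}^{\ast})^{2}/(\alpha_{m}^{2}(D+t-1)^{2}n\tau_{n}^{2})$.

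The main obstacle I anticipate is the data dependence $\alpha_{i}=\alpha_{i}(\mathbf{x})$, which prevents a clean Fubini interchange between $\mathop{\mathbb{E}}_{\theta_{0}}$ and $\int(\cdot)\,d\Pi$ and, more critically, breaks the independence of the per-record contributions $\alpha_{i}L_{i}(\theta)$ used in the variance computation. My plan is to handle this as the authors indicate following Theorem~\ref{th:consistency}: deploy the uniform pointwise bound $|\alpha_{i}(\mathbf{x})L_{i}(\theta)|\leq|L_{i}(\theta)|$ so that the weights drop out of the first- and second-moment inequalities, and invoke the asymptotic attenuation of the $\alpha_{i}(\mathbf{x})$-dependence to absorb the residual cross-terms. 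A secondary challenge is the careful bookkeeping needed to recover the exact prefactor $(1+\alpha_{l}^{2})(C_{1}^{\ast})^{2}/\alpha_{m}^{2}$ rather than a looser constant.
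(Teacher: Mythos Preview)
Your global architecture (Jensen to reduce to a linear functional, Assumption~\ref{prior} for the mean, Chebyshev for the tail) matches the paper exactly. The gap is in the second-moment step. Your diagonal split
\[
\mathrm{Var}_{\theta_{0}}(R)\;\leq\;\sum_{i\in A_{n}}\alpha_{i}^{2}\,\mathbb{E}_{\theta_{0}}\!\bigl[(\textstyle\int L_{i}\,d\Pi)^{2}\bigr]\;+\;(\alpha^{(n)})^{2}\sum_{i\in Q_{n}}\mathbb{E}_{\theta_{0}}\!\bigl[(\textstyle\int L_{i}\,d\Pi)^{2}\bigr]
\]
does not match Assumption~\ref{size}: the $Q_{n}$-block carries $(\alpha^{(n)})^{2}\!\to 1$, not $(1-\alpha^{(n)})^{2}$, so the bound $(1-\alpha^{(n)})\leq C_{3}\tau_{n}n_{Q}^{-1/2}$ has no grip; and for the $A_{n}$-block, $|A_{n}|\leq C_{1}n^{1/2}$ is useless without a per-record bound on $\mathbb{E}_{\theta_{0}}[L_{i}^{2}]$, which Assumption~\ref{prior} does not supply. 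More fundamentally, as you yourself note, the data-dependence $\alpha_{i}=\alpha_{i}(\mathbf{x})$ kills independence of the summands $\alpha_{i}L_{i}$, so the diagonal-only bound is not justified, and the uniform majorization $|\alpha_{i}L_{i}|\leq|L_{i}|$ you propose cannot recover the cross-term control.

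The paper's device is to center against the \emph{unweighted} empirical measure rather than against $P_{\theta_{0}}$ directly: write $\mathbb{G}_{n,\bm{\alpha}}=\sqrt{n}(\mathbb{P}_{n,\bm{\alpha}}-\mathbb{P}_{n})+\mathbb{G}_{n}$ and bound $\mathbb{E}_{\theta_{0}}(\mathbb{G}_{n,\bm{\alpha}}\!\int\!\log\frac{p_{\theta}}{p_{\theta^{\ast}}}d\Pi)^{2}$ by the second moments of the two pieces. The unweighted piece $\mathbb{G}_{n}$ is handled by genuine independence plus the second-moment clause of Assumption~\ref{prior}, yielding the $\tau_{n}^{2}$ summand. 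The novelty is the first piece,
\[
n\,\mathbb{E}_{\theta_{0}}\bigl(\mathbb{P}_{n,\bm{\alpha}}-\mathbb{P}_{n}\bigr)^{2}\;=\;\frac{1}{n}\sum_{i,j}\mathbb{E}_{\theta_{0}}\!\bigl[(\alpha_{i}-1)(\alpha_{j}-1)L_{i}L_{j}\bigr],
\]
where every summand carries a factor $(\alpha_{i}-1)$. This is precisely where Assumption~\ref{size} bites: the off-diagonal $A_{n}\times A_{n}$ block has at most $|A_{n}|^{2}\leq C_{1}^{2}n$ terms, and the $Q_{n}\times Q_{n}$ block carries $(1-\alpha^{(n)})^{2}\leq C_{3}^{2}\tau_{n}^{2}/n_{Q}$. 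Summing these with the diagonal contribution produces $(1-\alpha_{l})^{2}(1+C_{1}^{2})\tau_{n}^{2}+C_{3}^{2}\tau_{n}^{2}$, and combining with the $\tau_{n}^{2}$ from $\mathbb{G}_{n}$ yields the composite constant $(1+\alpha_{l}^{2})(C_{1}^{\ast})^{2}$ with $C_{1}^{\ast}=\sqrt{2+C_{1}^{2}+C_{3}^{2}}$. In short: the $(\alpha_{i}-1)$ factors, not the $\alpha_{i}^{2}$ factors, are what Assumption~\ref{size} is designed to tame, and they appear only after you subtract off the unweighted empirical process.
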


\begin{proof}\label{AppDenominator}
The proof follows that of \citet{2015arXiv150707050S} by bounding the probability expression on left-hand size of Equation~(\ref{denomresult}).  We construct an $\bm{\alpha}-$weighted empirical distribution that we will need for the proof with,
\begin{equation}
\mathbb{P}_{n,\bm{\alpha}} = \frac{1}{n}\mathop{\sum}_{i=1}^{n}\alpha_{i}\delta\left(x_{i}\right),
\end{equation}
where $\delta(x_{i})$ denotes the Dirac delta function with probability mass $1$ at $x_{i}$.  We construct the associated scaled and centered empirical process, $\mathbb{G}_{n,\bm{\alpha}} = \sqrt{n}\left(\mathbb{P}_{n,\bm{\alpha}}-P_{\theta_{0}}\right)$.
The usual equally-weighted empirical distribution, $\mathbb{P}_{n}=\frac{1}{n}\mathop{\sum}_{i=1}^{n}\delta\left(x_{i}\right)$ and associated, $\mathbb{G}_{n} = \sqrt{n}\left(\mathbb{P}_{n}-P_{\theta_{0}}\right)$ may be viewed as special cases.   We may define the associated expectation functionals with respect to the $\bm{\alpha}-$weighted empirical distribution by $\mathbb{P}_{n,\bm{\alpha}} g = \frac{1}{n}\mathop{\sum}_{i=1}^{n}\alpha_{i}g\left(x_{i}\right)$.

Using Jensen's inequality,
\begin{equation}
\begin{split}\label{leftbound}
&\log\mathop{\int}_{\theta\in B_{n}}\mathop{\prod}_{i=1}^{n}\left[\frac{p_{\theta_{i}}}{p_{\theta^{\ast}_{i}}}\left(X_{i}\right)\right]^{\alpha_{i}}\xi\left(d\theta\right)\\
&\geq \mathop{\sum}_{i=1}^{n}\mathop{\int}_{\theta\in B_{n}}\alpha_{i}\log\frac{p_{\theta_{i}}}{p_{\theta^{\ast}_{i}}}\xi\left(d\theta\right) \\
&= n\mathbb{P}_{n,\bm{\alpha}}\mathop{\int}_{\theta\in B_{n}}\log\frac{p_{\theta}}{p_{\theta^{\ast}}}\xi\left(d\theta\right)
\end{split}
\end{equation}

We may use the above to now bound the left-hand size of Equation~(\ref{denomresult})
\begin{subequations}\label{leftboundcont}
\begin{align}
&P_{\theta_{0}}\left\{\mathop{\int}_{\theta\in B_{n}}\displaystyle e^{-r_{n,\bm{\alpha}}\left(\theta,\theta^{\ast}\right)}
\xi\left(d\theta\right)\leq e^{-\alpha_{m}(D+t)n\tau_{n}^{2}}\right\}\\
&\leq P_{\theta_{0}}\left\{n\mathbb{P}_{n,\bm{\alpha}}\mathop{\int}_{\theta\in B_{n}}\log\frac{p_{\theta}}{p_{\theta^{\ast}}}\xi\left(d\theta\right)\leq -\alpha_{m}(D+t)n\tau_{n}^{2}\right\}\\
&= P_{\theta_{0}}\left\{\mathbb{G}_{n,\bm{\alpha}}\mathop{\int}_{\theta\in B_{n}}\log\frac{p_{\theta}}{p_{\theta^{\ast}}}\xi\left(d\theta\right)\leq -\alpha_{m}(D+t)n\tau_{n}^{2}-\sqrt{n}P_{\theta_{0}}\log\frac{p_{\theta}}{p_{\theta^{\ast}}}\xi\left(d\theta\right)\right\}\\
&\leq P_{\theta_{0}}\left\{\mathbb{G}_{n,\bm{\alpha}}\mathop{\int}_{\theta\in B_{n}}\log\frac{p_{\theta}}{p_{\theta^{\ast}}}\xi\left(d\theta\right)\leq -\alpha_{m}(D+t)\sqrt{n}\tau_{n}^{2}-\sqrt{n}\tau_{n}^{2}\right\}\label{usepriorcond}\\
&= P_{\theta_{0}}\left\{\mathbb{G}_{n,\bm{\alpha}}\mathop{\int}_{\theta\in B_{n}}\log\frac{p_{\theta}}{p_{\theta^{\ast}}}\xi\left(d\theta\right)\leq -\alpha_{m}(D+t-1)\sqrt{n}\tau_{n}^{2}\right\},
\end{align}
\end{subequations}
where the bound in Equation~(\ref{usepriorcond}) uses the prior mass result from Assumption~\ref{prior}.  We proceed to use Chebyshev to bound the resultant probability, as follows:

\begin{align}
&P_{\theta_{0}}\left\{\mathbb{G}_{n,\bm{\alpha}}\mathop{\int}_{\theta\in B_{n}}\log\frac{p_{\theta}}{p_{\theta^{\ast}}}\xi\left(d\theta\right)\leq -\alpha_{m}(D+t-1)\sqrt{n}\tau_{n}^{2}\right\}\nonumber\\
&\leq\frac{\mathop{\int}_{\theta\in B_{n}}\left[\mathbb{E}_{P_{\theta_{0}}}
\left(\mathbb{G}_{n,\bm{\alpha}}\log\frac{p_{\theta}}{p_{\theta^{\ast}}}\right)^{2}\right]\xi\left(d\theta\right)}
{\alpha_{m}^{2}(D+t-1)^{2}n\tau_{n}^{4}}\label{chebyshev:e2},
\end{align}
where we have applied Fubini to the right side of Equation~(\ref{chebyshev:e2}) to move the expectation through the integral.  We now proceed to further bound the expression in brackets on the right-hand side of Equation~(\ref{chebyshev:e2}) from above. We
may decompose the expectation, as follows
\begin{equation}\label{gbound}
\mathbb{E}_{P_{\theta_{0}}}
\left(\mathbb{G}_{n,\bm{\alpha}}\log\frac{p_{\theta}}{p_{\theta^{\ast}}}\right)^{2}
\leq n\mathbb{E}_{P_{\theta_{0}}}\left(\mathbb{P}_{n,\bm{\alpha}}\log\frac{p_{\theta}}{p_{\theta^{\ast}}}-\mathbb{P}_{n}\log\frac{p_{\theta}}{p_{\theta^{\ast}}}\right)^{2} + \mathbb{E}_{P_{\theta_{0}}}\left(\mathbb{G}_{n,\bm{\alpha}}\log\frac{p_{\theta}}{p_{\theta^{\ast}}}\right)^{2}
\end{equation}
We first bound the second term on the right,
\begin{subequations}\label{runt}
\begin{align}
&\mathbb{E}_{P_{\theta_{0}}}\left(\mathbb{G}_{n,\bm{\alpha}}\log\frac{p_{\theta}}{p_{\theta^{\ast}}}\right)^{2}\\
&\leq \mathbb{E}_{P_{\theta_{0}}}\left(\sqrt{n}\mathbb{P}_{n,\bm{\alpha}}\log\frac{p_{\theta}}{p_{\theta^{\ast}}}\right)^{2}\\
&\leq \mathbb{E}_{P_{\theta_{0}}}\left(\frac{1}{\sqrt{n}}\mathop{\sum}_{i=1}^{n}\log\frac{p_{\theta}}{p_{\theta^{\ast}}}\right)^{2}\\
&\leq \frac{1}{n}\mathop{\sum}_{i=1}^{n}\mathbb{E}_{P_{\theta_{0}}}\left(\log\frac{p_{\theta}}{p_{\theta^{\ast}}}\right)^{2}\\
&\leq \frac{1}{n}\times n\tau_{n}^{2} = \tau_{n}^{2},
\end{align}
\end{subequations}
where we use independence of the $X_{i}$ to establish the fourth equation and Assumption~\ref{prior} to achieve the fifth equation.

We proceed to further simplify the bound in the first term on the right in Equation~(\ref{gbound}):
\begin{subequations}\label{bound}
\begin{align}
  & n\mathbb{E}_{P_{\theta_{0}}}\left(\mathbb{P}_{n,\bm{\alpha}}\log\frac{p_{\theta}}{p_{\theta^{\ast}}}-\mathbb{P}_{n}\log\frac{p_{\theta}}{p_{\theta^{\ast}}}\right)^{2} \\
  & = n\mathbb{E}_{P_{\theta_{0}}}\left( \frac{1}{n}\mathop{\sum}_{i=1}^{n}\left(\alpha_{i}-1\right)\log\frac{p_{\theta_{i}}}{p_{\theta^{\ast}_{i}}}\right)^{2} \\
  & = \frac{1}{n}\mathop{\sum}_{i,j=1}^{n}\mathbb{E}_{P_{\theta_{0}}}\left[\left(\alpha_{i}-1\right)\left(\alpha_{j}-1\right)\log\frac{p_{\theta_{i}}}{p_{\theta^{\ast}_{i}}}\left(X_{i}\right)\log\frac{p_{\theta,j}}{p_{\theta^{\ast},j}}\left(X_{j}\right)\right]\\
  \begin{split}
  & = \frac{1}{n}\mathop{\sum}_{i=j=1}^{n}\mathbb{E}_{P_{\theta_{0}}}\left[\left(\alpha_{i}-1\right)^{2}\log\frac{p_{\theta_{i}}}{p_{\theta^{\ast}_{i}}}\left(X_{i}\right)^{2}\right]\\
  & + \frac{1}{n}\mathop{\sum}_{i\neq j=1}^{n}\mathbb{E}_{P_{\theta_{0}}}\left\lvert\left[\left(\alpha_{i}-1\right)\left(\alpha_{j}-1\right)\log\frac{p_{\theta_{i}}}{p_{\theta^{\ast}_{i}}}\left(X_{i}\right)\log\frac{p_{\theta,j}}{p_{\theta^{\ast},j}}\left(X_{j}\right)\right]\right\rvert
  \end{split}\\
  \begin{split}
  & \leq \frac{1}{n}\left\{\left(1-\alpha_{l}\right)^{2}\mathop{\sum}_{i\neq j=1}^{n}\mathbb{E}_{P_{\theta_{0}}}\left[\log\frac{p_{\theta_{i}}}{p_{\theta^{\ast}_{i}}}\left(X_{i}\right)^{2}\right] \right\}\\
  & + \frac{1}{n}\left(1-\alpha_{l}\right)^{2}\mathop{\sum}_{i\neq j\in A_{n}}\left\lvert\mathbb{E}_{P_{\theta_{0}}}\log\frac{p_{\theta_{i}}}{p_{\theta^{\ast}_{i}}}\left(X_{i}\right)\log\frac{p_{\theta,j}}{p_{\theta^{\ast},j}}\left(X_{j}\right)\right\rvert\\
  & + \frac{1}{n}\left(1-\alpha^{(n)}\right)^{2}\mathop{\sum}_{i\neq j\in Q_{n}}\left\lvert\mathbb{E}_{P_{\theta_{0}}}\log\frac{p_{\theta_{i}}}{p_{\theta^{\ast}_{i}}}\left(X_{i}\right)\log\frac{p_{\theta,j}}{p_{\theta^{\ast},j}}\left(X_{j}\right)\right\rvert\\
  \end{split}\\
  & \leq \frac{1}{n}\left\{\left(1-\alpha_{l}\right)^{2}n\tau_{n}^{2}\right\} + \frac{1}{n}\left(1-\alpha_{l}\right)^{2}\left(C_{1}^{2}n-C_{1}\sqrt{n}\right)\tau_{n}^{2} + n_{Q}\frac{C_{3}^{2}\tau_{n}^{2}}{n_{Q}}\label{prioronBn}\\
  & = \left\{\left(1-\alpha_{l}\right)^{2}\tau_{n}^{2}\right\} + \left(1-\alpha_{l}\right)^{2}C_{1}^{2}\tau_{n}^{2} + C_{3}^{2}\tau_{n}^{2},
\end{align}
\end{subequations}
for sufficiently large $n$.  The bound in Equation~(\ref{prioronBn}) results from the restriction of $\theta$ to $B_{n}\left(\theta^{\ast},\eta;\theta_{0}\right)$ and also from Assumption~\ref{size} that regulates the growth of the number of $\alpha_{i} < 1^{-}$ and the magnitude of $(1-\alpha^{(n)})$.

We may now bound the expectation on the right-hand size of Equation~(\ref{chebyshev:e2}),
\begin{subequations}
  \begin{align}
  \mathbb{E}_{P_{\theta_{0}}}
  \left(\mathbb{G}_{n,\bm{\alpha}}\log\frac{p_{\theta}}{p_{\theta^{\ast}}}\right)^{2} &\leq \left\{\left(1-\alpha_{l}\right)^{2}\tau_{n}^{2}\right\}\left(1-\alpha_{l}\right)^{2}C_{1}^{2}\tau_{n}^{2}+ \tau_{n}^{2} \\
  &\leq \left\{\left(1-2\alpha_{l} + \alpha_{l}^{2}\right)\tau_{n}^{2} + \left(1-2\alpha_{l}+\alpha_{l}^{2}\right)C_{1}^{2}\tau_{n}^{2} + C_{3}^{2}\eta{n}^{2} + \tau_{n}^{2}\right\}\\
  &\leq (2+C_{1}^{2}+C_{3}^{2})\tau_{n}^{2} + (1+C_{1}^{2})\alpha_{l}^{2}\tau_{n}^{2} \leq (1+\alpha_{l})^{2}(C_{1}^{\ast})^{2}\tau_{n}^{2} 
  \end{align}
\end{subequations}
  for $n$ sufficiently large, where we set $C^{\ast}_{1} := \sqrt{C_{1}^{2} + C_{3}^{2} +2}$.  This concludes the proof.
\end{proof}

\subsubsection{Proof of Theorem~\ref{th:consistency}}
We begin by constructing the $\bm{\alpha}-$pseudo posterior distribution on the set, $U_{n}$,
\begin{equation}\label{postonU}
\xi^{\bm{\alpha}}\left(U_{n} \mid  \mathbf{x}\right) = \frac{\mathop{\int}_{U_{n}}e^{-r_{n,\bm{\alpha}}\left(\theta,\theta^{\ast}\right)}\xi(d\theta)}{\mathop{\int}_{\Theta}e^{-r_{n,\bm{\alpha}}\left(\theta,\theta^{\ast}\right)}\xi(d\theta)}.
\end{equation}
We next bound the numerator from above in $P_{\theta_{0}}-$ probability.

\begin{subequations}\label{numerator}
  \begin{align}
  &\mathbb{E}_{P_{\theta_{0}}}\mathop{\int}_{U_{n}}e^{-r_{n,\bm{\alpha}}\left(\theta,\theta^{\ast}\right)}\xi(d\theta)\\
  & = \mathop{\int}_{U_{n}}A^{(n)}_{\theta_{0},\bm{\alpha}}\left(\theta,\theta^{\ast}\right)\xi(d\theta)\label{Aform}\\
  & = \mathop{\int}_{U_{n}} e^{-\mathop{\sum}_{i=1}^{n}\left(1-\alpha_{i}\right)D_{\theta_{0},\bm{\alpha},i}}\xi(d\theta)\\
  & \leq \mathop{\int}_{U_{n}} e^{-\left(1-\alpha_{m}\right)\mathop{\sum}_{i\in A_{n}}D_{\theta_{0},\bm{\alpha},i} -\left(1-\alpha^{(n)}\right)\mathop{\sum}_{i\in Q_{n}}D_{\theta_{0},1^{-},i}}~\xi(d\theta)\label{alphabound}\\
  & \leq e^{-\left(D+3t\right)n\tau_{n}^{2}}\label{numbound},
  \end{align}
\end{subequations}
where we use Fubini to switch the order of expectation and integration in Equation~(\ref{Aform}).  We achieve the bound in Equation~(\ref{alphabound}) since
$D_{\theta_{0},\bm{\alpha},i} >0,~\forall i \in (1,\ldots,n)$ and \citet{BFP2019AS} shows that $D^{(n)}_{\theta_{0},1^{-}}\left(\theta,\theta^{\ast}\right)$ is finite and contracts on the KL divergence.  The final bound uses the definition of $U_{n}$.

We proceed to use the Markov inequality and the definition for $U_{n}$ to achieve the numerator bound with respect to $P_{\theta_{0}}-$probability,
\begin{subequations}\label{finalnumbound}
  \begin{align}
  &P_{\theta_{0}}\left\{\mathop{\int}_{U_{n}}e^{-r_{n,\bm{\alpha}}\left(\theta,\theta^{\ast}\right)}\xi(d\theta) \geq e^{-\left(D+2t\right)n\tau_{n}^{2}}\right\}\\
  &\leq\frac{e^{-\left(D+3t\right)n\tau_{n}^{2}}}{e^{-\left(D+2t\right)n\tau_{n}^{2}}}= e^{-tn\tau_{n}^{2}} \leq \frac{(1+\alpha_{l}^{2})(C_{1}^{\ast})^{2}}{\alpha_{m}^{2}(D-1+t)^{2}n\tau_{n}^{2}}.
  \end{align}
\end{subequations}

We, next, turn to bounding the denominator of Equation~(\ref{postonU}), from below.   Since,
\begin{equation*}
  \mathop{\int}_{\theta\in\Theta}e^{-r_{n,\bm{\alpha}}\left(\theta,\theta^{\ast}\right)}\xi(d\theta) \geq \mathop{\int}_{\theta \in B_{n}}e^{-r_{n,\bm{\alpha}}\left(\theta,\theta^{\ast}\right)}\xi(d\theta),
\end{equation*}
we may use the result of Lemma~\ref{denominator} in,
\begin{equation}\label{denombound}
P_{\theta_{0}}\left\{\mathop{\int}_{\theta\in\Theta}e^{-r_{n,\bm{\alpha}}\left(\theta,\theta^{\ast}\right)}\xi(d\theta)\geq e^{-\alpha_{m}(D+t)n\tau_{n}^{2}}\right\} > 1 -\frac{(1+\alpha_{l}^{2})(C_{1}^{\ast})^{2}}{\alpha_{m}^{2}(D-1+t)^{2}n\tau_{n}^{2}}.
\end{equation}

Finally, combining the results of Equations~(\ref{postonU}),~(\ref{finalnumbound}) and ~(\ref{denombound}): With probability at least $1- \left[2/(D+t-1)^{2}n\tau_{n}^{2}\times (1+\alpha_{l}^{2}(C_{1}^{\ast})^{2})/\alpha_{m}^{2}\right]$,
\begin{align*}
\begin{split}
\xi^{\bm{\alpha}}\left(\left[(1-\alpha_{m})D^{(n_{A})}_{\theta_{0},\bm{\alpha}}\left(\theta,\theta^{\ast}\right) + (1-\alpha^{(n)})D^{(n_{Q})}_{\theta_{0},1^{-}}\left(\theta,\theta^{\ast}\right)\right] \geq (D+3t)n\tau_{n}^{2}\big\vert \mathbf{x}\right) &\leq \\
& e^{-\left(D+2t\right)n\tau_{n}^{2}} e^{\alpha_{m}(D+t)n\tau_{n}^{2}}
\end{split}\\
& \leq e^{-tn\tau_{n}^{2}}
\end{align*}

\section{Unweighted, Non-private Synthesizer}
\label{appendix:synthesizer}

Our description of the unweighted, non-private synthesizer follows closely of that in \citet{HuSavitskyWilliams2021rebds}. To simulate partially synthetic data for the CE sample, where only the sensitive, continuous family income variable is synthesized, we propose using a flexible, parametric finite mixture synthesizer. 

Equation (\ref{eq:y}) and Equation (\ref{eq:z}) present the first two levels of the hierarchical parametric finite mixture synthesizer: $y_i$ is the logarithm of the family income for CU $i$, and $\bm{x}_i$ is the $R \times 1$ predictor vector for CU $i$. The finite mixture utilizes a hyperparameter for the maximum number of mixture components (i.e., clusters), $K$, that is to set to be over-determined to permit the flexible clustering of CUs.  A subset of CUs that are assigned to cluster, $k$, employ the same generating parameters for $y$, $(\beta^{\ast}_{k},\sigma^{\ast}_{k})$, that we term a ``location".  Locations, $(\bm{\beta}^*, \bm{\sigma}^*)$, and the $n \times 1$ vector of cluster indicators, $z_i \in (1, \cdots, K)$, are all sampled for each CU, $i \in (1,\ldots,n)$.
\begin{eqnarray}
	\label{eq:y} y_i \mid \mathbf{X}_i, z_i, \mathbf{B}^{\ast}, \bm{\sigma}^{\ast} &\sim& \textrm{Normal}(y_i \mid \mathbf{x}_i^{'}\bm{\beta}^{\ast}_{z_i} , \sigma^{\ast}_{z_i}), \\
	\label{eq:z} z_i \mid \mathbf{\pi} &\sim& \textrm{Multinomial}(1; \pi_1, \cdots, \pi_{K}),
\end{eqnarray}
where the $K \times R$ matrix of regression locations, $\mathbf{B}^{\ast} = \left(\bm{\beta}^{\ast}_{1},\ldots,\bm{\beta}^{\ast}_{K}\right)^{'}$, denote cluster-indexed regression coefficients for $R$ predictors.  The $(\pi_{1},\ldots,\pi_{K})$ are, in turn, assigned a sparsity inducing Dirichlet distribution with hyperparameters specified as $\alpha/K$ for $\alpha \in \mathbb{R}^{+}$. We next describe our prior specification.

We induce sparsity in the number of clusters with,
\begin{align}
\left(\pi_{1},\ldots,\pi_{K}\right) &\sim \mbox{Dirichlet}\left(\frac{\alpha}{K},\ldots,\frac{\alpha}{K}\right)\label{eq:DPprior-pi}, \\
\alpha &\sim \textrm{Gamma}(a_{\alpha}, b_{\alpha}).	\label{eq:DPprior-alpha}
\end{align}

We specify multivariate Normal priors for each regression coefficient vector of coefficient locations, $\bm{\beta}^{\ast}_k$,
\begin{equation}
\bm{\beta}^{\ast}_k \iid \textrm{MVN}_{R}(\mathbf{0}, \mbox{diag}(\bm{\sigma}_{\beta})\times \mathop{\Omega_{\beta}}^{R \times R} \times \mbox{diag}(\bm{\sigma}_{\beta}) ), \label{eq:prior-beta}
\end{equation}
where the $R \times R$ correlation matrix, $\Omega_{\beta}$, receives a uniform prior over the space of $R \times R$ correlation matrices, and each component of $\bm{\sigma}_{\beta}$ receives a Student-t prior with $3$ degrees of freedom,
\begin{equation}
 \sigma^{\ast}_{k} \iid \textrm{t}(3, 0, 1). \label{eq:prior-sigma}
\end{equation}

We proceed to describe how to generate partially synthetic data for the CE sample. To implement the finite mixture synthesizer, we first generate sample values of $(\bm{\pi}^{(l)}, \bm{\beta}^{*,(l)}, \bm{\sigma}^{*, (l)})$ from the posterior distribution at MCMC iteration $l$. Second, for CU $i$, we generate cluster assignments, $z_i^{(l)}$, from its full conditional posterior distribution given in \citet{HuSavitskyWilliams2021rebds} using the posterior samples of $\bm{\pi}^{(l)}$.  Lastly, we generate synthetic family income for CU $i$, $y_i^{\ast, (l)}$, from Equation (\ref{eq:y}) given $\bm{x}_i$, and samples of $z_i^{(l)}, \bm{\beta}^{*,(l)}$ and $\bm{\sigma}^{*, (l)}$. We perform these draws for all $n$ CUs, and obtain a partially synthetic database, $\bm{Z}^{(l)}$ at MCMC iteration $l$. We repeat this process for $m$ times, creating $m$ independent partially synthetic databases $\bm{Z} = (\bm{Z}^{(1)}, \cdots, \bm{Z}^{(m)})$.

\end{document}